\documentclass[final,nomarks]{dmtcs-episciences}

\graphicspath{{graphics}}

\usepackage{xspace}
\usepackage{xcolor}
\usepackage{amsthm}
\usepackage{amsmath}
\usepackage{amssymb}
\usepackage{thmtools}
\usepackage[capitalize]{cleveref}

\usepackage[round]{natbib}
\usepackage[utf8]{inputenc}
\usepackage{subfigure}

\usepackage{doi}

\newtheorem{theorem}{Theorem}
\newtheorem{lemma}{Lemma}
\newtheorem{colorallary}{Corollary}
\newtheorem*{remark}{Remark}

\newcommand{\R}{\mathbb{R}}
\newcommand{\N}{\mathbb{N}}
\renewcommand{\P}{\mathbb{P}}
\newcommand{\eps}{\varepsilon}
\renewcommand{\L}{\mathcal{L}}

\newcommand{\col}{\phi}
\newcommand{\Q}{\ensuremath{\mathcal{Q}}\xspace}

\DeclareMathOperator*{\polylog}{polylog}

\newcommand{\etal}{et al.\xspace}

\author{Erwin Glazenburg\affiliationmark{1}\thanks{Supported by the Netherlands Organisation for Scientific Research (NWO) under project OCENW.M20.135.} \and Frank Staals\affiliationmark{1}}
\affiliation{Department of Information and Computing Sciences, Utrecht University, Utrecht, The Netherlands}

\title[On strictly output sensitive color frequency reporting]{On strictly output sensitive color frequency reporting\thanks{An earlier version of this article appeared at SOFSEM26: \cite{sofsemVersion}}}

\keywords{Data structure, Range searching, Lower bound, Frequency reporting}

\begin{document}

\publicationdata{vol. 28:4, SOFSEM 2026}{2026}{2}{10.46298/dmtcs.17715}{2026-03-13; 2026-03-13; 2026-06-23}{2026-06-25}

\maketitle

\begin{abstract}
  Given a set of $n$ colored points $P \subset \R^d$ we wish to store $P$ such that, given some query region $Q$, we can efficiently report the colors of the points appearing in the query region, along with their frequencies. This is the \emph{color frequency reporting} problem. We study the case where query regions $Q$ are axis-aligned boxes or dominance ranges. If $Q$ contains $k$ colors, the main goal is to achieve ``strictly output sensitive'' query time $O(f(n) + k)$. Firstly, we show that, for every $s \in \{ 2, \dots, n \}$, there exists a simple $O(ns\log_s n)$ size data structure for points in $\R^2$ that allows frequency reporting queries in $O(\log n + k\log_s n)$ time. Secondly, we give a lower bound for the weighted version of the problem in the arithmetic model of computation, proving that with $O(m)$ space one can not achieve query times better than $\Omega\left(\phi \frac{\log (n / \col)}{\log (m / n)}\right)$, where $\phi$ is the number of possible colors. This means that our data structure is near-optimal. We extend these results to higher dimensions as well. Thirdly, we present a transformation that allows us to reduce the space usage of the aforementioned data structure to $O(n(s \col)^\eps \log_s n)$. Finally, we give an $O(n^{1+\eps} + m \log n + K)$-time algorithm that can answer $m$ dominance queries in $\R^2$ with total output complexity $K$, while using only linear working space.
\end{abstract}

\section{Introduction}
Let $P$ be a set of $n$ points in $\R^d$, with $d = O(1)$, each of
which has a color in $\{1, \dots, \col\}$, and let $P_c$ denote the
subset of points of color $c$. We want to store $P$ so that given an
axis aligned query box $Q$ we can efficiently report the colors of the
points appearing in the query range, together with their
frequency. That is, we wish to report a set of $k$ color, frequency
pairs $(c,f)$ so that $f=|P_c \cap Q| > 0$ and the sum of the
frequencies is $|P \cap Q|$. See
Figure~\ref{fig:problemDescription}. We call this \emph{color
  frequency reporting}, but it is also known by the rather vague name
of \emph{type-2} color
counting \citep{gupta1995furtherResultsOnGeneralizedIntersectionSearching}. More
generally, we may associate every point $p \in P$ with a weight $w(p)$
from some appropriate semigroup, and instead report, for each color
$c$ appearing in the query range, the total weight $w(P_c \cap Q) = \sum_{p \in P_c \cap Q}w(p)$ in
the query range. Our main goal
is to obtain small, ideally (near-)linear-space data structures for
these problems that achieve a \emph{strictly output sensitive} query
time of the form $O(f(n) + k)$, i.e. a query time strictly linear in
the output size $k$ (and whose dependency $f(n)$ on $n$ is of course
also sublinear, and preferably even (poly-)logarithmic). We are
generally interested in data structures in the pointer machine model,
as this tends to give relatively simple and flexible data structures.

\begin{figure}[ht]
    \centering
    \includegraphics{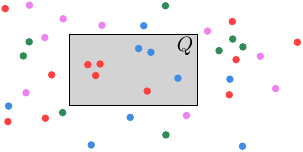}
    \caption{A set of colored points $P$ in $\R^2$ and a query box $Q$
      with color frequencies (red, 4) and (blue, 3). There are no green
      or purple points in $Q$, hence they do not appear in the output.
    }
    \label{fig:problemDescription}
\end{figure}

\subparagraph{Related work.} Color frequency reporting is a form of
\emph{generalized intersection searching} \citep{gupta1995furtherResultsOnGeneralizedIntersectionSearching,gupta2018ColoredSearchingSurvey,janardan1993generalizedIntersectionSearching}. For
points in $\R^1$, the problem was already solved optimally in 1995:
\cite{bozanis1995generalizedIntersectionSearching} gave
a data structure of linear size and query time $O(\log n +
k)$. Similarly, there is an optimal data structure for the dual
problem of reporting the frequencies of a set of colored intervals
stabbed by a query
point by \cite{gupta1995furtherResultsOnGeneralizedIntersectionSearching}.

In $\R^2$, it is relatively simple to obtain a data structure of size
$O(n \log n)$ with query time $O(\log n + k \log n)$: we first report
all intersected colors in $O(\log n + k)$ time using a colored range
reporting query (\cite{shi2005optimalColoredAlgs}), and then for all $k$
colors separately perform a standard uncolored range counting query in
$O(\log n)$ time each using range trees (\cite{Comp_geom_book}).

\cite{gupta2018ColoredSearchingSurvey} describe an
$O(n^{1+\eps})$ space data structure that, given a set of $n$ points
in $\R^d$, can report the $k$ distinct colors appearing in the query
range in $O(\log n + k)$ time. Here, and throughout the rest of the
paper $\eps > 0$ is an arbitrarily small constant. It is possible to
adapt their approach, which is based on bootstrapping, to support
frequency reporting queries as well. To achieve query time
$O(\polylog n + k)$, we need at least
$\Omega(n(\log n/\log\log n)^{d-1})$ space, as the lower bound
for range reporting by \cite{chazelle1990ReportingLowerBound} applies
(just assign each point a unique color), however, we are not aware of
any near-linear-space structures that achieve strictly output
sensitive query times. In particular, it is open whether one can
achieve even a query time of $O(n^{0.99}+k)$ using near-linear space.

In the more powerful word-RAM model, some recent progress has been
made towards near linear space structures for points in $\R^2$. \cite{chan2020furtherResultsOnColoredRangeSearching}
achieve query time $O\left(\frac{\log n}{\log\log n} + k \log\log n \right)$ using
$O(n \log^{1+\eps} n)$ space. They use a recursive grid approach and
several ``bit-tricks'', which allows them to sum the counts of multiple
colors simultaneously. \cite{afshani2023rangeSummaryQueries} present a linear-space data
structure for simplex color frequency reporting queries. They achieve
query time $O(n^{1-1/d}+n^{1-1/d}\phi^{1/d}/w^\alpha)$, where
$\alpha > 0$ is some constant and $w$ is the word size (which is
typically $\Theta(\log n)$). Furthermore, they solve approximate color
frequency counting with an \emph{additive} error $\eps n$ for
dominance queries with $O(n)$ space and $O\left(\log n + \eps^{-1}n + k \right)$
query time.

In external memory with block size $B$, \cite{ganguly2019categorical} give a data structure of size $O(n)$
that answers \emph{constant-factor approximate} frequency reporting
queries in $O(\log_B n + \log^* B + k/B)$ time (here,
  $\log^* B$ is the iterated logarithm of $B$, i.e. the smallest value
  $i$ for which $\log^{(i)} B = \log (\log^{(i-1)} B) \leq 2$ (and using
  that $\log^{(1)} B = \log B$)). In the pointer machine model, their techniques can be used to answer $(1+\eps)$-approximate frequency reporting queries using $O(n/\eps)$ space in $O(\log n + k)$ time.

\subparagraph{Challenges.} Achieving our goal of strict output
sensitivity using near linear space turns out to be extremely
challenging. Most data structuring approaches (including the
techniques used in the above results) follow a divide-and-conquer
approach that partitions the space (and thereby the input point
set). However, this may cause the points $P_c \cap Q$ of color $c$ to
be stored in many, say $g(n)$, different places in the data
structure. This means that we have to aggregate their frequencies at
query time. If this happens for $\Omega(k)$ colors, we get an
$\Omega(kg(n))$ cost in the query time. If we instead store the points
per color, we may have to spend $\Theta(\log n)$ time per color to
actually count the subset that lies in $Q$. Filtering
search \citep{chazelle1986filtering} is a common technique used in
reporting queries which allows us to charge some of the ``searching''
costs to the output. However, as our main challenge is combining the
color counts, even filtering search is of limited use.

\subparagraph{Results \& Organization.} We present the following
results:

\begin{itemize}
\item When $P$ is a set of $n$ points in $\R^2$, and we are given a
  parameter $s \in \{2\dots n\}$, we present a simple $O(ns\log_s n)$ size
  data structure storing $P$ that allows frequency reporting queries
  for dominance ranges in $O(\log n + k\log_s n)$ time. See Section~\ref{sec:An_abstract_data_structure}. For ease of
  exposition, we present our data structure for color frequency
  reporting, but our results apply also to the weighted version of the
  problem. Setting
  $s=n^\eps$, this then yields an $O(n^{1+\eps})$ space structure with
  strictly output sensitive $O(\log n + k)$ query time. This matches,
  and we believe simplifies, the result of \cite{gupta2018ColoredSearchingSurvey}. With this structure we can essentially interpolate between the $O(n^{1+\eps})$ space data structure with $O(\log n + k)$ query time, and the $O(n \log n)$ space data structure with $O(\log n + k \log n)$ query time. Setting e.g.
  $s=2^{\sqrt{\log n}}$ reduces the space to $O(n2^{\sqrt{\log n}})$
  space, while still answering queries at a cost of $\sqrt{\log n}$
  per reported color.

\item We extend the above results to queries with arbitrary axis
  aligned boxes in $\R^d$. Every additional dimension comes at the
  cost of an $O(s\log_s n)$ factor in space, and an $O(\log_s n)$ factor per
  color (and an additive $O(\log^{d-1} n)$ term) in query time. Every
  additional side we allow in the query range (from $d$-sided
  dominance ranges to arbitrary $2d$-sided axis-aligned boxes) comes
  at the cost of a factor $\log n$ in space, and a factor $2$ in query
  time. With $s \approx n^\eps$, this still leads to strictly
  output sensitive query time $O(\log n + k)$ using $O(n^{1+\eps})$
  space.

\item We prove a lower bound for the weighted case with dominance
  queries in the arithmetic model; see Section~\ref{sec:Lower_bound}. In this model, computation is essentially free: we
  count only the number of pre-computed weights (of subsets of points)
  stored by the data structure, and the number of these weights we
  have to combine to obtain the answer to a query. Building on
  the lower bounds for counting queries in this
  model by \cite{chazelle1990ArithmeticLowerBound}, we prove that using
  $O(m)$ space one can not achieve query time faster than
  $\Omega\left(\phi \left( \frac{\log (n / \col)}{\log(m/n)} \right)^
    {d-1}\right)$. In particular, compare this to our tradeoff
  data structure from Section~\ref{sec:An_abstract_data_structure},
  which uses $O(ns^{d-1} \polylog n)$ space for
  $O(\polylog n + k \log_s^{d-1} n)$ query time; the lower bound says
  that using $O(ns^{d-1})$ space, one cannot achieve query time better
  than
  $\Omega\left(\phi \left( \frac{\log (n/\col)}{\log((ns^{d-1})/n)} \right)^
    {d-1}\right) = \Omega\left(\phi \left( \frac{\log
        (n/\col)}{\log(s^{d-1})} \right)^ {d-1}\right) = \Omega\left(\phi
    \left( \frac{\log (n/\col)}{\log s} \right)^ {d-1}\right) = \Omega(\phi
  \log_s^{d-1} (n/\col))$. So for $\col \leq n^c$ with $c < 1$, our data
  structure is worst-case optimal with respect to the query dependency
  on $k$, up to $\log n$ factors in space.

  This suggests that it will be hard to get strictly output-sensitive query
  times for (weighted) color frequency reporting queries using near
  linear space. At the very least, one has to use properties specific
  to counting (e.g. that we can subtract frequencies) to achieve this
  goal. Note however that even in the uncolored case no such results
  are known.

\item We present a transformation that, for some values of $s$ and $\col$, allows us to reduce the
  space for 2D dominance queries to $O(n(s\col)^\eps \log_s n)$, by answering
  several one dimensional dominance queries simultaneously; see
  Section~\ref{sec:transformation}. Hence, this allows us to shave
  some logarithmic factor in space in certain scenarios.

\item Lastly, we present a result on the algorithmic version of the problem in which we are
  given a set of $n$ colored points $P \subset \R^2$ as well as a
  set \Q of $m$ 3-sided color frequency reporting queries that we
  \emph{all} have to answer. See
  Figure~\ref{fig:problemDescriptionOffline}. Our algorithm in
  Section~\ref{sec:A_linear_space_algorithm} uses linear,
  i.e. $O(n + m)$, working space, and answers all $m$ queries in
  $O(n^{1+\eps}+m\log n + K)$ total time, where $K$ is the total output
  complexity of all queries. Note that e.g. for $m = \Theta(n)$ we
  thus essentially obtain $O(n^{\eps}+k)$ time per query.

\end{itemize}

\begin{figure}[hb]
    \centering
    \includegraphics{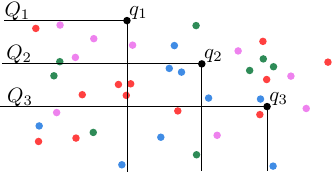}
    \caption{A set of points and three dominance queries. In the offline version, we would want to answer all three queries at once.}
    \label{fig:problemDescriptionOffline}
\end{figure}

\section{Preliminaries}
\label{sec:Preliminaries}

We focus on range searching where the query region
$Q = [a_1,b_1] \times [a_2,b_2] \times \dots \times [a_d,b_d]$ is an
axis-aligned box in $d$-dimensional space. A box in $\R^1$ is an
interval, and a box in $\R^2$ is a rectangle. In the simplest case,
the box is unbounded in one direction for every dimension, e.g. towards $-\infty$. A
query $Q=(-\infty,x] \times (-\infty,y] \times (-\infty,z] \times \dots$ of
this type is a \emph{dominance} query; see \cref{fig:problemDescriptionOffline}. We will often identify a dominance query $Q$ with its corner point $q=(x,y,z,\dots)$. A dominance query can be seen as a \emph{$d$-sided box}: it is bounded
on one side in each of the $d$ dimensions. Similarly an $i$-sided box,
for $d \leq i \leq 2d$, is bounded on two sides in $i-d$ dimensions,
and on one side in the remaining $2d - i$ dimensions. See \cref{fig:iSidedBoxes}.

\begin{figure}[ht]
    \centering
    \includegraphics{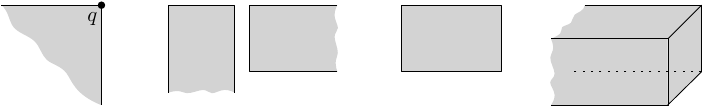}
    \caption{From left to right: $2$-, $3$-, and $4$-sided boxes in $\R^2$, and a $5$-sided box in $\R^3$.}
    \label{fig:iSidedBoxes}
\end{figure}

\subparagraph{Color frequency reporting queries in $\R^1$.} We briefly
review the approach of \cite{bozanis1995generalizedIntersectionSearching} (which itself
uses ideas from \cite{gupta1995furtherResultsOnGeneralizedIntersectionSearching})
that solves frequency reporting queries in $\R^1$. We focus on the case of dominance queries
$(-\infty,q]$. Fix a color $c \in \{1,\dots,\phi \}$, and let $x_i$ be the $i^\textrm{th}$ smallest
point in $P_c$. They then map $x_i$ to the point $(x_i,x_{i+1})$ in
$\R^2$ (or $(x_i,\infty)$ if no successor exists). Let $\tilde{P}_c$
denote the resulting set of mapped points, and let
$\tilde{P}=\bigcup_c \tilde{P}_c$. See \cref{fig:1Dstruct}. Bozanis \etal observe that the
query quadrant $\tilde{Q} = (-\infty,q] \times [q,\infty)$ contains at
most one point $(x_i,x_{i+1}) \in \tilde{P}$ of each color $c$, namely the
rightmost point of color $c$ in $(-\infty,q]$. Thus if for color $c$ query $\tilde{Q}$ contains point $x_i$, then the original query $Q$ contains exactly $i$ points of color $c$. They store the set $\tilde{P}$ in a priority search tree in order to quickly report (the ranks of) all $k$ points contained in query $\tilde{Q}$ in $O(\log n + k)$ time. 

We can answer weighted queries as well simply by setting the
weight of the point $(x_i,x_{i+1})$ to be $\sum_{h \leq i} x_h$. For two-sided
query intervals $[\ell,r]$ one can similarly report the leftmost
points of each color $c$ in the query range, and then instead report
the difference in ranks as the final count for color $c$; in this case the queries $\tilde{Q}$ will be three-sided. In the
weighted case this requires the weights to be taken from a group
(rather than a semigroup).

\begin{figure}[ht]
    \centering
    \includegraphics{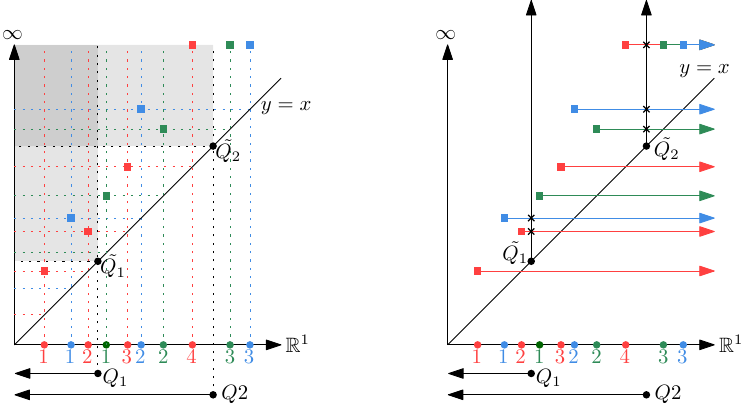}
    \caption{Left: an illustration of the 1D data structure from \cite{bozanis1995generalizedIntersectionSearching}. A set of points in $\R^1$ and two queries (on the $x$-axis) mapped to their corresponding counterparts in $\R^2$. Note that a mapped query corresponds to a quadrant originating from the line $y=x$, and that each query indeed contains at most one point of each color. Right: the same data structure implemented with halflines.}
    \label{fig:1Dstruct}
\end{figure}

\section{A strictly output sensitive data structure}
\label{sec:An_abstract_data_structure}

We first describe a generic data structure for dominance queries in
$\R^2$. We then extend the result to dominance queries in $\R^d$, and finally to arbitrary axis-aligned query boxes.

We first show that the 1D data structure by \cite{bozanis1995generalizedIntersectionSearching} can be implemented using a binary search rather than a priority search tree (which will allow us to use fractional cascading \citep{chazelle1986fractionalCascading} to speed up multiple 1D queries on different data structures with the same query).

\begin{lemma}
  \label{lem:1d_dominance_fc}
  Let $P$ be a set of $n$ points in $\R^1$ with colors from
  $\{1 \dots \phi \}$. In $O(n\log n)$ time we can build an $O(n)$ space data
  structure, so that we can answer dominance color frequency reporting queries using a binary search and $O(k)$ additional time.
\end{lemma}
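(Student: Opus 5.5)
The plan is to replace the priority search tree by a linked structure over "halflines", as in the right part of \cref{fig:1Dstruct}. Each mapped point $(x_i,x_{i+1})$ of $\tilde{P}_c$ corresponds to the half-open interval $I_i = [x_i, x_{i+1})$ on the real line. For a fixed color $c$ these intervals are pairwise disjoint. Their union is $[x_1,\infty)$, where $x_1$ is the smallest point of color $c$. The query quadrant $\tilde{Q}$ contains $(x_i,x_{i+1})$ exactly when $q \in I_i$, so a query $(-\infty,q]$ amounts to reporting all intervals, over all colors, that are stabbed by $q$, together with their ranks $i$ (or prefix weights).

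\textbf{Construction.} Sort all $n$ points, which takes $O(n\log n)$ time. This partitions the line into $O(n)$ elementary slabs between consecutive input coordinates. The set of intervals stabbed by $q$ is the same for every $q$ in one slab. Storing these sets explicitly per slab could cost $\Theta(n\phi)$ space, so instead we store them implicitly.

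\textbf{Linked list per slab.} The key observation is this. Moving from one slab to the next, across the point $x_j$ of color $c$, only color $c$'s interval changes: $I_{j'-1}$ is replaced by $I_{j'}$ if $x_j$ is the $j'$-th point of its color, and the color becomes active if $j'=1$. Thus the active set of a slab equals the previous slab's active set with one element substituted or inserted.

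We store, for each slab, the single record $(c, j')$ that "enters" at its left boundary, with pointer $\mathit{next}$ to the previous slab's record. To report a query, binary search for the slab containing $q$. Then walk backward through the slabs, reporting each color the first time it is encountered with its rank, and skipping colors already seen.

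The danger is that skipping could cost far more than $O(k)$. To avoid this, each record keeps a pointer not to the previous slab but to the most recent slab to its left whose entering color has not been superseded by a later record to the right. Equivalently, the chain from slab $t$ visits exactly the last occurrence of each color among slabs $\le t$. This is a persistent linked list: inserting color $c$ deletes $c$'s older node. A naive path-copying structure would not give $O(n)$ space.

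\textbf{Fixing the space: switch to a sweep from the right.} Walking leftward from $q$ along the sorted order, the first point of color $c$ we encounter is the rightmost point of $P_c$ at most $q$, which is exactly the one to report. So define for each point $p$ the pointer $\mathit{prev}(p)$ to the nearest point to its left whose color does not occur strictly between it and $p$ (including $p$ itself). The chain from the last point at most $q$ then visits exactly the $k$ rightmost-per-color points.

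These pointers form a forest with $n$ nodes, hence $O(n)$ space. They can be computed in $O(n\log n)$ time by a left-to-right sweep that maintains, in a balanced tree keyed by position, the current rightmost point of each color. Each pointer $\mathit{prev}(p)$ is the predecessor of $p$ in that tree after deleting $p$'s own color's older point.

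A query is then one binary search for the predecessor of $q$, followed by $O(k)$ pointer steps, each reporting one color with its stored rank. Every point stores its rank within its color, or its prefix weight in the weighted case.

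The main obstacle is verifying that this chain is well defined and visits every active color exactly once. This follows because the chain from $p$ is the sequence of "last occurrences" ordered by position, and that sequence for $p$ is obtained from the sequence for $p$'s predecessor by removing $p$'s color and prepending $p$. So the successor of $p$ in the chain is the first element of the predecessor's sequence of a different color. This element is either the predecessor itself or its chain successor, so the pointers can even be computed in $O(1)$ each after sorting.
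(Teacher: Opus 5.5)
Your final construction has a genuine gap: one query-independent pointer $\mathit{prev}(p)$ per point cannot work. The chain from $p$ is $p$ followed by the \emph{fixed} chain of $\mathit{prev}(p)$. What you need after $p$, however, is the last-occurrence sequence of $\mathit{prev}(p)$ with the older node of $p$'s color removed, and that node generally still lies on the chain of $\mathit{prev}(p)$. Your closing argument only justifies the first step (the successor of $p$ is the first element of the predecessor's sequence with a different color); it says nothing about the rest of the chain.

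Two examples show the failure. For the sorted colors red, blue, red you get $\mathit{prev}(3)=2$ and $\mathit{prev}(2)=1$, so a query right of the third point reports red twice. For the alternating sequence $c_1,c_2,c_1,c_2,\dots$ every pointer goes to the immediate predecessor, so the chain has length $n$ while $k=2$; neither correctness nor the $O(k)$ bound holds. This is not a matter of choosing the pointers more cleverly. For red, blue, red, blue, the successor of point~3 must be point~2 for a query just right of point~3, but the walk must stop after point~3 for a query right of point~4. This is exactly the deletion-in-the-middle problem you flagged yourself when you called the per-slab lists a persistent linked list; the ``sweep from the right'' reformulation does not remove it.

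The missing ingredient is a structure whose successor relation depends on the query position. The paper gets this geometrically. It maps each $\tilde p=(x_i,x_{i+1})$ to a rightward horizontal ray and stores these rays in Chazelle's hive graph. A query then becomes reporting the rays crossed by the upward vertical ray from $(q,q)$: one binary search for the topmost ray at $x=q$, followed by an $O(1)$-per-ray walk down to $(q,q)$.

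Alternatively, you could keep your sweep but make the list genuinely (partially) persistent via the node-copying method of Driscoll, Sarnak, Sleator and Tarjan. Each sweep step deletes the old node of color $c$ and inserts the new one at the front, costing $O(1)$ amortized space, so $O(n)$ overall. After one binary search for the predecessor of $q$, whose node is the head of that version, you traverse the version in $O(k)$ time.
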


\begin{proof}
  We implement the structure by \cite{bozanis1995generalizedIntersectionSearching}
  using a hyve graph \citep{chazelle1986filtering}. That is, Bozanis \etal
  transform every point $p \in P$ to a point $\tilde{p}$ in $\R^2$, so
  that one can answer a frequency counting query by reporting all
  points (one per color) from $\tilde{P} = \{\tilde{p} \mid p \in P\}$
  in a quadrant $\tilde{Q} = (-\infty,q] \times [q,\infty)$. We now
  map every point $\tilde{p}$ into a rightward ray starting at
  $\tilde{p}$, and store them in Chazelle's hyve graph \cite[Corollary
  1]{chazelle1986filtering}. We can then report all points in
  $\tilde{Q}$ by reporting all halflines that intersect the upward
  vertical ray originating from $(q,q)$. See the right side of \cref{fig:1Dstruct}. In particular, by doing a binary
  search for the topmost segment at $q$, and then walking vertically
  down, reporting the $k$ segments intersected by the ray until we
  reach point $(q,q)$. The construction time remains $O(n\log n)$. 
\end{proof}
Note that this technique does not extend to two sided query intervals $[\ell,r]$, since in that case we need to perform three-sided queries in $\R^2$ rather than dominance queries.

We now use this structure to answer dominance queries for points in $\R^2$.

\begin{lemma}
  \label{lem:2d_dominance}
  Let $P$ be a set of $n$ points in $\R^2$, each of which has a color from $\{1, \dots, \col\}$, and let $2 \leq s \leq n$ be a parameter. In
  $O(ns \log n\log_s n)$ time we can construct a data structure of
  size $O(ns\log_s n)$ that can answer dominance frequency reporting queries in
  $O(\log n + k\log_s n)$ time.
\end{lemma}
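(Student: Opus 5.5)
We build an $s$-ary tree $T$ (a degree-$s$ range tree) on the $x$-coordinates of $P$. Its leaves are the points sorted by $x$, every internal node has $s$ children, and the height is $O(\log_s n)$. Let $P_\nu$ be the points in the subtree of a node $\nu$, and let $\nu_1,\dots,\nu_s$ be its children from left to right.

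For every node $\nu$ and every prefix index $j \in \{1,\dots,s\}$ we store the 1D structure of \cref{lem:1d_dominance_fc} on the $y$-coordinates of $P_{\nu_1}\cup\dots\cup P_{\nu_j}$. The point set here is the union of the first $j$ children, not a single child. Each point of $P_\nu$ occurs in at most $s$ of these prefix sets, so node $\nu$ costs $O(s|P_\nu|)$ space. Summing over the $O(\log_s n)$ levels, where each level partitions $P$, gives $O(ns\log_s n)$ space. Building each structure takes $O(m\log m)$ time on $m$ points, which gives the stated $O(ns\log n\log_s n)$ construction time. (Pre-sorting by $y$ once would save the log, but this is not needed.)

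To answer a query $q=(x,y)$, we walk down the root-to-leaf search path for $x$. At a node $\nu$ on this path, let $\nu_{j+1}$ be the child containing $x$. The points of $P_\nu$ with $x$-coordinate at most $x$ are then exactly the points of the first $j$ children (all of which lie to the left), together with whatever is picked up further down inside $\nu_{j+1}$. So the $x$-range $(-\infty,x]$ is the disjoint union, over the $O(\log_s n)$ path nodes, of one prefix set per node. The answer to $Q$ therefore follows from one 1D dominance query $(-\infty,y]$ on each of these $O(\log_s n)$ prefix structures, plus constant work at the leaf.

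Each 1D query reports the colors present in its canonical set with their frequencies. We merge these results by color using an array of counters indexed by color, together with a list of the touched colors, and reset the touched entries afterwards. A color appearing in $Q$ is reported by at most $O(\log_s n)$ canonical sets, and each canonical set only reports colors that actually occur in $Q$. The total reporting and merging work is therefore $O(k\log_s n)$.

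The main obstacle is the search cost. Doing $O(\log_s n)$ independent binary searches would cost $O(\log n\log_s n)$, not $O(\log n)$. This is exactly why \cref{lem:1d_dominance_fc} was phrased with a binary search: all the 1D queries use the same value $y$, so we can apply fractional cascading \citep{chazelle1986fractionalCascading} along the tree. The catch is that the catalog graph has degree $s$ and not bounded degree, so we must be careful.

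My first attempt is to cascade the catalogs of a node's prefix structures with the catalog of that node's parent, through a chain $\nu\to(\nu,1)\to\dots\to(\nu,s)$ and on to the children. We then never move through more than $O(s)$ catalogs on a path to $(\nu,j)$. A cleaner alternative is to note that the search needed at $(\nu,j)$ is just the rank of $y$ in $P_\nu$, restricted to a subset. So we store, for every element of $P_\nu$'s $y$-sorted list, a pointer to its successor in each prefix list. This costs $O(s|P_\nu|)$ space, which fits the budget. The same $O(s)$ pointers per element also let us follow the search from $P_\nu$ into its children. We then do a single binary search at the root in $O(\log n)$ time, and $O(1)$ pointer hops per level, for $O(\log n+\log_s n+k\log_s n)=O(\log n+k\log_s n)$ total query time.
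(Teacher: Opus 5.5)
Your proof is correct, and its skeleton matches the paper's. You use the $s$-ary tree on $x$, a Lemma~\ref{lem:1d_dominance_fc} structure on the $y$-coordinates of each prefix of a node's children, one query $(-\infty,y]$ per node on the search path, and the same space and preprocessing accounting.

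The genuine difference is how you eliminate the $O(\log_s n)$ separate binary searches. The paper applies fractional cascading to all catalogs $\mathcal{L}_i$. It first replaces each $s$-ary node by a balanced binary tree over its children, so the catalog graph has constant degree and a query path has length $O(\log s\cdot\log_s n)=O(\log n)$. Your chain-shaped first attempt lacks exactly this: it passes through $\Theta(s)$ catalogs per level, $O(s\log_s n)$ in total, which is too slow for non-constant $s$.

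Your second route is sound and more elementary. Every prefix set and every child set of $\nu$ is a subset of $P_\nu$. So a pointer from each element of $P_\nu$'s $y$-sorted list to its successor in each of these $2s$ sublists turns the position of $y$ in $P_\nu$ into its position in every sublist. This is a fanout-$s$ version of the downpointers in layered range trees. The pointers cost $O(s|P_\nu|)$ space and can be built by one scan per node. That fits only because the budget already pays $\Theta(s)$ per point per level; fractional cascading instead adds only a constant factor to the catalogs and needs no such slack.

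One caveat concerns the model. Both your ``$O(1)$ hops per level'' (selecting the $j$-th of $s$ pointers) and your color-indexed counter array use random access, whereas the paper targets the pointer machine. There you would store each element's $s$ pointers in a balanced tree keyed by the child boundaries. That costs $O(\log s)$ per level, still $O(\log n)$ overall. You would also replace the array by the paper's device of storing, with each point, a pointer to a per-color counter node.
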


\begin{proof}
  We build a balanced $s$-ary tree on the $x$-coordinates of the
  points, of height $O(\log_s n)$. See \cref{fig:datastructTree}. Every node $\nu$ corresponds to a vertical strip $S_\nu$ and
  the subset of points $P_\nu = P \cap S_\nu$ in $S_\nu$; the root node $r$ corresponds to a singular 'strip' covering the whole plane, and $P_r = P$. Each strip $S_\nu$ is then horizontally partitioned into $s$ vertical strips $S_1,\dots,S_s$, one strip $S_i$
  for each child $c_i$ of $\nu$, each of which contains a $1/s$ fraction of the points.

  Consider a node $\nu$. Let $L_i = \bigcup_{h < i} P_{c_h}$ be the subset of points in $P_\nu$ that lie left of $S_i$,
  and for a point $p \in L_i$ let $p'$ be the $y$-coordinate of $p$, and let $L'_i = \{p' \mid p \in L_i\}$ denote the
  resulting set of 1D points ($y$-coordinates). We essentially regard $p'$ as the projection of $p$
  onto the left boundary of strip $S_i$, see
  \cref{fig:largeSpaceExplanation}. We store each set $L'_i$ in a
  one-dimensional color frequency reporting data structure
  $\mathcal{L}_i$ as described in \cref{lem:1d_dominance_fc}. We do this for every node $\nu$ and every strip $S_i$ in $\nu$.

  \begin{figure}[ht]
    \centering
    \includegraphics{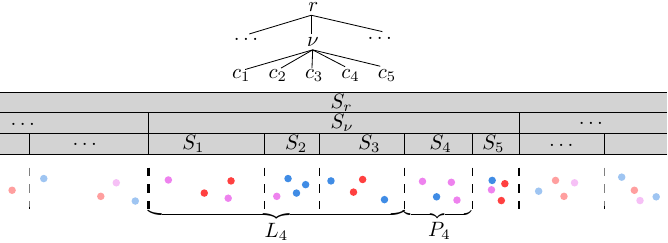}
    \caption{A set of points in $\R^2$ with an $s$-ary strip-tree on its $x$-coordinates (with $s=5$). The points $P_4$ and $L_4$ for strip $S_4$ are indicated.}
    \label{fig:datastructTree}
  \end{figure}
  
  \begin{figure}[ht]
    \centering
    \includegraphics{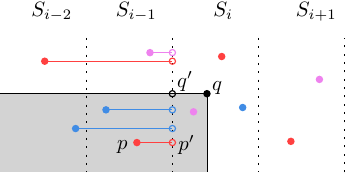}
    \caption{A dominance query whose corner point $q=(x,y)$ lies in
      strip $S_i$. }
    \label{fig:largeSpaceExplanation}
  \end{figure}

  We use fractional cascading \citep{chazelle1986fractionalCascading} to link all of these $\mathcal{L}_i$
  data structures in the entire tree together, so that given a query
  value $y$ we can perform the binary searches for $y$ in all
  $O(\log_s n)$ nodes along a root-to-leaf path in $O(\log n)$ time in
  total, rather than $O(\log_s n \log n)$ time. To this end, we actually implement every node $\nu$ as a
  small binary tree on its $s$ children (so that the degree of each
  node is constant).

  The total space usage is $O(ns)$ per level, and thus $O(ns\log_s n)$
  in total. Furthermore, constructing all these data structures takes
  $O(ns\log n)$ time per level, and thus $O(ns\log n\log_s n)$ time in
  total.

  \subparagraph{The query algorithm.} To answer a query
  $Q=(-\infty,x] \times (-\infty,y]$ we follow a root-to-leaf path towards
  $x$ while maintaining the frequency of each (relevant) color. At
  node $\nu$, let $S_i$ be the (sub)strip of $S_\nu$ containing $x$. We then query $\mathcal{L}_i$ with $(-\infty,y]$ and
  update the frequencies of the at most $k$ reported colors. Indeed
  observe that (i) a point $p \in L_i$ left of strip $S_i$ appears in $Q$
  if and only if $p' \in (-\infty,y]$, (ii) that none of the points right of
  $S_i$ lies in $Q$, and (iii) that the points in $S_i$ are accounted for in
  the subtree corresponding to $S_i$.

  At each node of the search path we have to update the frequencies
  for at most $k$ reported colors. Note that this is not as trivial as
  it might sound, especially in the pointer machine model, as the
  data structures of Lemma~\ref{lem:1d_dominance_fc} do not report the colors
  in a fixed order. Hence, we may have to sort these colors in
  $O(k\log k)$ time before merging the two lists of color counts. To
  update the color frequencies in $O(k)$ time per node, we instead use
  a mechanism similar to \cite{gupta1997rangeRestrictions}. Our data structure also
  stores a linked list of length $\phi$ whose $c^\textrm{th}$ node
  $\mu_c$ (with $c \in \{ 1, 2, \dots \col \}$) maintains the current
  total count of color $c$. If a query on one of the subsets returns
  some tuple $(c, k_c)$ in an answer, we increment the value at $\mu_c$
  by $k_c$. For each point of color $c$ stored in one of the
  Lemma~\ref{lem:1d_dominance_fc} data structures, we also store a
  pointer to node $\mu_c$. This allows us to update the answer in $O(1)$
  time per color, at no extra space cost. We additionally maintain a
  list of pointers to all non-zero nodes to report all final counts in
  $O(k)$ time (whenever we increment the count of a color $c$ and the
  previous count of that color was $0$, we add a pointer to $\mu_c$ to
  this list). We appropriately reset these linked lists after
  answering a query.
  
  \subparagraph{Query time.} We can find the root-to-leaf path towards $x$ in $O(\log n)$ total time by doing a single binary search on the leaves, and walking up towards the root. This path has length $O(\log_s n)$, and at each node we perform one 1D query in $O(\log n + k)$ time; note that each of the $k$ output colors may appear in every node. Using the above procedure of updating frequencies, the total query time is thus $O(\log n + k \log_s n)$ (recall that the 1D structures are linked using fractional cascading). 
\end{proof}

\subsection{Dominance queries in higher dimensions.} We extend the
above approach to the case of dominance queries in $\R^d$, for
$d \geq 2$. Again, this result applies to the weighted case as wel.

\begin{theorem}
  \label{thm:tradeoff_dominance_data_structure}
  Let $P$ be a set of $n$ points in $\R^d$, with $d \geq 2$, each of
  which has a color from $\{1, \dots, \col\}$, and let $s \in \{ 2, \dots, n \}$ be
  a parameter. In $O(n\log n(s\log_s n)^{d-1})$ time, we can build a
  data structure of size $O(n(s \log_s n)^{d-1})$ that can answer a
  frequency reporting dominance query $Q$ in
  $O(\log n \log_s^{d-2} n + k\log_s^{d-1} n)$ time.
\end{theorem}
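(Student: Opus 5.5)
We argue by induction on $d$. The base case $d=2$ is \cref{lem:2d_dominance}: it gives space $O(ns\log_s n)$, preprocessing $O(ns\log n\log_s n)$, and query time $O(\log n + k\log_s n)$. These match the claimed bounds for $d=2$, since $\log_s^{d-2} n = 1$.

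For the inductive step, take $d \geq 3$ and build a balanced $s$-ary tree on the first coordinate of the points. It has height $O(\log_s n)$, and every node $\nu$ is a slab $S_\nu$ split into $s$ child slabs $S_1,\dots,S_s$. As in \cref{lem:2d_dominance}, for every node $\nu$ and every child index $i$ we let $L_i$ be the points of $P_\nu$ lying in child slabs $S_h$ with $h<i$. We project $L_i$ onto the remaining $d-1$ coordinates, forgetting the first one. The resulting $(d-1)$-dimensional colored point set is stored in the structure given by the induction hypothesis, with the same parameter $s$.

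\textbf{Space.} A node with $m$ points stores $s$ structures on at most $m$ points each. By induction this costs $O(s\cdot m (s\log_s n)^{d-2})$, using that $\log_s m \le \log_s n$. Summing over one level of the tree gives $O(n s (s\log_s n)^{d-2})$, and multiplying by the $O(\log_s n)$ levels gives $O(n(s\log_s n)^{d-1})$. The same calculation, with an extra $\log n$ factor, gives the construction time $O(n\log n (s\log_s n)^{d-1})$.

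\textbf{Query.} Given the corner $q=(x_1,\dots,x_d)$, we walk the root-to-leaf path towards $x_1$. At each node we identify the child slab $S_i$ containing $x_1$ and query the corresponding $(d-1)$-dimensional structure with $(x_2,\dots,x_d)$. Correctness follows from the same three observations as in \cref{lem:2d_dominance}:
\begin{itemize}
\item points of $L_i$ satisfy the first-coordinate constraint automatically, so they lie in $Q$ exactly when their projection lies in the projected query;
\item points in child slabs to the right of $S_i$ are never in $Q$;
\item points inside $S_i$ are handled further down the path.
\end{itemize}
For the aggregation step, every stored point keeps a pointer to a global counter node $\mu_c$ for its color, together with the list of counters that are currently nonzero. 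The counts returned by all recursive queries are then added into $\mu_c$ in $O(1)$ per reported pair, and the final list is read off in $O(k)$ time.

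\textbf{Query time.} Each of the $O(\log_s n)$ nodes on the path issues one $(d-1)$-dimensional query. By induction each costs $O(\log n\log_s^{d-3} n + k\log_s^{d-2} n)$, because every such query reports at most $k$ colors. Summing over the path gives $O(\log n\log_s^{d-2} n + k\log_s^{d-1} n)$, as claimed. This also explains the base case: fractional cascading is used only inside the two-dimensional structures, along one root-to-leaf path there, and is not applied across the higher-dimensional levels, which is why each additional dimension costs a factor $\log_s n$ in the additive term.

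\textbf{Main obstacle.} The delicate step is the pointer-machine aggregation. Each point now appears in many nested substructures, and each copy must carry a pointer to the single global counter $\mu_c$, while the counts coming from different nested queries must combine additively. In the recursive structures, the 1D structures of \cref{lem:1d_dominance_fc} report partial counts (ranks, or prefix weights in the weighted case) for one subset at a time. So we must check that these partial counts can be passed upward and added into $\mu_c$ directly, with no intermediate per-level merging. That is exactly what allows the recursion to add only $O(1)$ work per reported color per node visited. The weighted case needs only a semigroup, since weights are only ever summed over disjoint subsets.
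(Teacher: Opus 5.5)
Your proposal is correct and follows essentially the same route as the paper. Both use induction on $d$ with \cref{lem:2d_dominance} as the base case, an $s$-ary tree on the first coordinate storing $(d-1)$-dimensional structures on the projected prefix sets $L_i'$, and per-level space/time bounds multiplied by the $O(\log_s n)$ height. Your explicit treatment of the global counters $\mu_c$ simply spells out what the paper summarizes as keeping track of the color frequencies ``as before.''
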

\begin{proof}
  We prove this by induction on $d$. For $d = 2$, we simply build the
  data structure of Lemma~\ref{lem:2d_dominance}, which has the
  desired space usage, and query and preprocessing time.

  For the inductive step we first describe the data structure and
  query algorithm, and then analyze the space usage and query time.

  \subparagraph{The data structure and the query algorithm.} As in
  Lemma~\ref{lem:2d_dominance} we build a balanced $s$-ary tree on
  the $x$-coordinates of the points in $P$. Let $S_1,\dots,S_s$ again
  denote the vertical strips (whose bounding planes have a normal in
  the direction of the $x$-axis) that partition the strip of a node
  $\nu$. Define $P_i$ and $L_i$ as before. For each point $p \in L_i$
  we again consider the $(d-1)$-dimensional point $p'$ obtained by
  discarding the $x$-coordinate of $p$, and similarly let
  $P' = \{p' \mid p \in P\}$. For each strip $S_i$ we construct the
  $(d-1)$-dimensional data structure $\L_i$ on the points $L_i'$.

  To answer a dominance query $Q=(-\infty,q_x] \times Q'$ we again
  follow the root-to-leaf path towards $q_x$, query
  the appropriate associated structures $\L_i$ at each node, and keep track of the color frequencies as before.

  \subparagraph{Space usage and preprocessing time.} Consider the
  space used by the root node of the tree. Each $(d-1)$-dimensional
  data structure stored there takes $O(n (s \log_s n)^{d-2})$
  space. We build $s$ of them for a total space usage of
  $O(s n (s \log_s n)^{d-2})$. Every point is involved in only one
  recursive subproblem, so the total number of points on each level of
  the recursion tree is $n$, and thus on each level we use
  $O(s n (s \log_s n)^{d-2})$ space. The total
  space usage is thus
  $O((\log_s n) s n (s \log_s n)^{d-2}) = O(n (s \log_s n)^{d-1})$, as
  claimed.

  Analogously, the total preprocessing time is
  $O(n\log n(s\log_s n)^{d-1})$.

  \subparagraph{Query time.} We find the root-to-leaf path in $O(\log n)$ time. For each node on this path we query the appropriate $(d-1)$-dimensional data
  structure in $O(\log n \log_s^{d-3} n + k\log_s^{d-2} n)$ time. Since the tree has height $O(\log_s n)$, the total query time is $O(\log_s n(\log n \log_s^{d-3} n + k\log^{d-2} n)) = O(\log n \log_s^{d-2} n +
  k\log_s^{d-1} n)$.
\end{proof}

\subsection{From dominance queries to general box queries}

We now consider answering queries for more general types of boxes. 

\begin{restatable}{lemma}{twosidedFourSidedReduction}
\label{lem:data_structureIncreaseNumberOfSides}
Assume we are given a data structure that can report color frequencies with $i$-sided box queries (with $d \leq i \leq 2d$), using $S(n) \geq O(n)$ space with $Q(n,k) \geq O(\log n)$ query time. We can then create a data structure that can answer $j$-sided box queries (with $i \leq j \leq 2d$ using $O(S(n) \log^{j-i}n)$ space with $O(2^{j-i} Q(n,k))$ query time.
\end{restatable}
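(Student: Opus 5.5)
We handle one extra side at a time and then iterate. It therefore suffices to show the following: from an $i$-sided structure with space $S(n)$ and query time $Q(n,k)$, we can build an $(i+1)$-sided structure with space $O(S(n)\log n)$ and query time $O(2Q(n,k))$. Applying this $j-i$ times gives space $O(S(n)\log^{j-i} n)$ and query time $O(2^{j-i}Q(n,k))$, as claimed. Here the constant in the $O(\cdot)$ must not compound per step, or at least compounds only to a constant since $j-i\le d=O(1)$.

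Let $D$ be a dimension in which $i$-sided queries are one-sided, say of the form $(-\infty,b]$, and in which $(i+1)$-sided queries are two-sided, $[a,b]$. We apply the standard trick of turning a two-sided constraint into two one-sided ones via a balanced binary tree. Build a balanced binary search tree on the $D$-coordinates of $P$. For every internal node $\nu$ with split value $m_\nu$, let $P_\nu$ be the points in its subtree. We store two $i$-sided structures on the points of $P_\nu$: one on the points in the left child, handling queries whose $D$-range has the form $[a,\infty)$ (a mirrored copy of the given structure), and one on the points in the right child, handling queries with $D$-range $(-\infty,b]$.

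Each point appears in $O(\log n)$ nodes. Assuming $S$ is superadditive (which holds since $S(n)\ge \Omega(n)$ and we may take $S(n)/n$ nondecreasing), the total space per level is $O(S(n))$, giving $O(S(n)\log n)$ overall.

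To answer a query with $D$-range $[a,b]$, descend from the root to the highest node $\nu$ whose split value $m_\nu$ lies in $[a,b]$; this takes $O(\log n)\le O(Q(n,k))$ time. Every point with $D$-coordinate in $[a,b]$ lies in $P_\nu$. The points in the left child satisfy $D\le m_\nu\le b$ automatically, so for them the constraint reduces to $D\ge a$. Symmetrically, for the points in the right child the constraint reduces to $D\le b$. We issue one $i$-sided query to each of the two structures. Each returns at most $k$ colors, since the two parts are disjoint subsets of $P\cap Q$, so each query costs at most $Q(n,k)$, using monotonicity of $Q$ in $n$ and $k$. We then merge the two answer lists in $O(k)$ time. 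For the merge we use the same per-color counter list with pointers and the list of nonzero counters as in the proof of Lemma~\ref{lem:2d_dominance}, which avoids sorting the colors. In the weighted setting this merge needs only semigroup addition, since the two parts are disjoint. The total query time is therefore $2Q(n,k)+O(\log n+k)=O(2Q(n,k))$, using $Q(n,k)\ge\Omega(\log n+k)$; the output term $k$ is implicit since any reporting structure must spend $\Omega(k)$.

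The main subtlety I expect is the mirroring. The given structure answers queries bounded in a fixed direction, but the left part needs a bound of the form $D\ge a$. I would handle this by negating the $D$-coordinate, which turns an $[a,\infty)$ constraint into a $(-\infty,-a]$ one and keeps the query $i$-sided with the same shape in all other dimensions. A second point to check is that the constants do not blow up across the $j-i$ iterations. The additive $O(\log n + k)$ per step is absorbed into $Q$ each time, and since $j-i\le d$ is constant, the accumulated constant factor stays bounded.
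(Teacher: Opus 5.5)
Your proof is correct and follows the paper's argument. Like the paper, you build a balanced binary tree on the coordinate that becomes two-sided, split the query at the highest node whose split value lies in $[a,b]$ into two $i$-sided queries on the two children (gaining an $O(\log n)$ space factor and a factor $2$ in query time), and iterate $j-i$ times; your treatment of mirroring, superadditivity of $S$, the $O(k)$ merge via per-color counters, and the non-compounding constants fills in details the paper leaves implicit.
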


\begin{proof}
We first show how to answer $(i+1)$-sided box queries using the given data structure; we can answer $j$-sided box queries by repeating this procedure $j-i$ times. We assume the $(i+1)$-sided box is bounded from two sides in the $x$-dimension; by reordering the coordinates we can answer any type of $(i+1)$-sided query.

We build a balanced binary tree $T$ on the $x$-coordinates of the points $P$. For a node $u$, let $P_u$ be the set of points contained in the subtree of $u$, and let $x_u$ be the median $x$-coordinate in $P_u$ (the $x$-coordinate that splits $u$'s children).

Let $Q$ be an $(i+1)$-sided query with $x$-range $[x_1,x_2]$ that we wish to answer. We
find the highest node $u \in T$ such that $x_u \in [x_1,x_2]$; let $c_1$ and
$c_2$ be its children. Note that $Q$ is split in two by $x_u$. See
Figure~\ref{fig:dominanceToRectangle}. Consider the two $i$-sided
queries $Q_1$ and $Q_2$ that are equivalent to $Q$ except for their
$x$-range, where $Q_1$ has range $[x_1,\infty)$ and $Q_2$ has range
$(-\infty,x_2]$. Now it holds that $Q \cap P_u = (Q_1 \cap P_{c_1}) \cup
(Q_2 \cap P_{c_2})$, i.e. the set of points from $P_u$ contained in
$Q$ is the same as the set of points from its left child contained in
$Q_1$ (the red area in the figure) combined with the set of points
from its right child contained in $Q_2$ (the green area); note that this is a disjoint union. Therefore,
by answering these two $i$-sided queries and combining the answers we
can answer the $(i+1)$-sided query $Q$. So, we store an $i$-sided data structure in each node of the tree, using $O(S(n)\log n))$ total space, answering queries in $O(\log n + 2Q(n)) = O(2Q(n))$ time.

Each of the $j-i$ steps we add one side to the query box, and we gain a factor $\log n$ in space and a factor $2$ in query time. The lemma follows.
\end{proof}

 \begin{figure}[ht]
    \centering
    \includegraphics{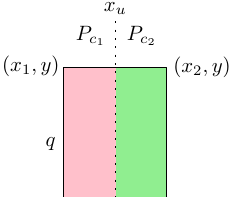}
    \caption{A 3-sided query $q$ at node $u$, with $x_u$ shown. }
    \label{fig:dominanceToRectangle}
  \end{figure}

\begin{remark}
One might think inclusion/exclusion offers a simpler solution. Consider a 2D rectangle query $Q$ with corners $tl,tr,bl,br$ (topright, topright, bottomleft, bottomright). We can perform a dominance query on all four cornerpoints and combine the resulting counts using inclusion/exclusion ($tr - tl - br + bl$) to obtain the answer to the rectangle query. However, suppose that rectangle query $Q$ contains only very few colors $k \ll \col$, but the dominance query from e.g. $tr$ contains all $\col$ colors. Answering dominance query $tr$ would take at least $\Omega(\col)$ time, so this technique is not strictly output sensitive, and thus not applicable for our use-case.
\end{remark}

Combining \cref{thm:tradeoff_dominance_data_structure,lem:data_structureIncreaseNumberOfSides} we thus get the following result for general $(2d)$-sided query boxes:

\begin{theorem}
  \label{col:tradeoff_datastructure_general_boxes}
  Let $P$ be a set of $n$ points in $\R^d$, each of which has a color from $\{1, \dots, \col\}$, and let $2 \leq s \leq n$ be a parameter. In
  $O(n \log^{d+1} n (s \log_s n )^{d-1})$ time, we can build a data structure of size
  $O(n \log^d n (s \log_s n)^{d-1})$ that can answer axis-aligned frequency reporting queries in
  $O(\log n \log^{d-2}_s n + k \log_s^{d-1} n)$ time.
\end{theorem}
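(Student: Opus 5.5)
The plan is to combine \cref{thm:tradeoff_dominance_data_structure} with \cref{lem:data_structureIncreaseNumberOfSides}. We start from the $d$-sided (dominance) structure and add the remaining $d$ sides.

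First, \cref{thm:tradeoff_dominance_data_structure} handles $i=d$. It gives space $S(n)=O(n(s\log_s n)^{d-1})$, query time $Q(n,k)=O(\log n\log_s^{d-2}n+k\log_s^{d-1}n)$, and preprocessing time $O(n\log n(s\log_s n)^{d-1})$. These satisfy the hypotheses $S(n)\ge\Omega(n)$ and $Q(n,k)\ge\Omega(\log n)$ of the lemma. Dominance ranges oriented towards $+\infty$ in some coordinates are handled by negating those coordinates. Thus the structure answers any $d$-sided box, as the lemma needs when it produces queries of the form $[x_1,\infty)$.

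Second, apply \cref{lem:data_structureIncreaseNumberOfSides} with $i=d$ and $j=2d$, so $j-i=d$. Space becomes $O(S(n)\log^d n)=O(n\log^d n(s\log_s n)^{d-1})$, as claimed. Query time becomes $O(2^d Q(n,k))$, and since $d=O(1)$ this is $O(\log n\log_s^{d-2}n+k\log_s^{d-1}n)$. Two small points need checking here. First, finding the split node in each of the $d$ binary trees costs $O(\log n)$, which is absorbed. Second, merging the two sub-answers per split must cost $O(k)$ and not $O(k\log k)$. We handle this by reusing the global color-counter list with per-point pointers from \cref{lem:2d_dominance}, so all recursive subqueries accumulate into the same counters. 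Each subquery still reports at most $k$ colors, because its range is contained in $Q$.

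Third, preprocessing. The lemma does not state construction time, so we argue it directly. Each side-adding step builds the given structure on every node of a balanced binary tree. Each level of the tree holds $n$ points in total, so a step multiplies construction time by $O(\log n)$ (the cost is superlinear in $n$, so summing per level is fine). After $d$ steps we get $O(n\log n(s\log_s n)^{d-1}\log^d n)=O(n\log^{d+1}n(s\log_s n)^{d-1})$. Sorting to build the trees costs $O(n\log n)$ per level, which is dominated.

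The main obstacle is this preprocessing bound, together with making sure the $2^d$ subqueries can share one counting mechanism in the pointer machine model. Both follow from the tree-level accounting and from the counter-list trick already used in \cref{lem:2d_dominance}.
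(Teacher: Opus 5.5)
Your proposal is correct and follows the paper's own argument: the paper gets this theorem by plugging the dominance structure of \cref{thm:tradeoff_dominance_data_structure} into \cref{lem:data_structureIncreaseNumberOfSides} with $i=d$ and $j=2d$. Your extra checks fill in details the paper leaves implicit: negating coordinates so both orientations are available, sharing one counter list so the $2^d$ sub-answers merge in $O(k)$ time, and the per-level accounting that gives the $O(n\log^{d+1}n\,(s\log_s n)^{d-1})$ preprocessing bound.
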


This result again also applies to the weighted case.

Our main goal was to achieve strictly output sensitive query time. We
can achieve this by using
\cref{col:tradeoff_datastructure_general_boxes} with $s = n^{\eps/d}$,
for any constant $\eps > 0$, resulting in a data structure with
$O(n \log^d n (n^{\eps/d} d/\eps)^{d-1}) \leq O(n^{1+\eps})$ space,
$O(n \log^{d+1} n (n^{\eps/d} d/\eps)^{d-1}) \leq O(n^{1+\eps})$
preprocessing time, and
$O(\log n (d/\eps)^{d-2} + k (d/\eps)^{d-1}) = O(\log n + k)$ query time (using
that $\log_{n^{(\eps/d)}}(n) = d/\eps = O(1)$). (In this version, we
can even forgo the fractional cascading approach, and use the original
data structure by \cite{bozanis1995generalizedIntersectionSearching} to solve the
queries in $\R^1$, since the recursion tree has constant height.) The overall result matches that of \citet[Corollary 4]{gupta1997rangeRestrictions}, however we believe that
our approach is simpler.

\section{A lower bound in the arithmetic model}
\label{sec:Lower_bound}
In this section we focus on the weighted version of the problem, where
each point in $P \subset \R^d$ also has a weight from some
semigroup $(S,+)$. Given a (dominance) query $Q$ we now wish to report, for
every color $c$ appearing in $Q$, the total weight $w(P_c \cap Q)$.  

Our lower bound holds in the arithmetic model, and builds on a result
by \cite{chazelle1990ArithmeticLowerBound} for the
uncolored version of this problem. We describe the model here briefly,
see Section 2 of Chazelle's article for more detail. We then state
Chazelle's result, and afterwards show how to apply it to obtain a
lower bound for our colored problem.

\subparagraph{The arithmetic model.}  In the arithmetic model we can store
\emph{generators}, which consist of a subset of points, and the total
sum of their weights. A collection of generators is called a
\emph{storage scheme $\Gamma$} if, for any query $Q$, we can find a
subset of generators containing exactly the points in $P \cap Q$. The
answer to an (uncolored) query is then simply the sum of the values of these
generators. The size of $\Gamma$ is the number of generators stored,
and the cost of answering a query is the number minimum number of
generators required to compute the answer. Importantly, the query cost
does not include the time required to actually find the correct generator(s), nor the size of these generators,
and as such the arithmetic model is very suitable for proving lower
bounds.

As a simple example, consider the 1D colored weight reporting problem
in the arithmetic model. The data structure from
Lemma~\ref{lem:1d_dominance_fc} stores $n$ generators: each point $p$
of color $c$ stores the total weight of all points of color $c$ left
of $p$. For any query, the correct weight for each color is thus
stored in a single generator. In the arithmetic model, the query cost
is thus simply $k$, while in the pointer machine model we still need
extra techniques (a transformation to 2D and a priority search tree)
to find the correct generators in $O(\log n)$ time.

We require that the used semi-group $(\mathcal{S},+)$ is \emph{faithful}, which is the case  if for each $n > 0$, $\emptyset \subset T_1, T_2 \subseteq [1 \dots n]$, $T_1 \neq T_2$, and every sequence of integers $\alpha_i, \beta_j > 0$ ($i \in T_1, j \in T_2$), the equation $\sum_{i \in T_1} \alpha_i s_i = \sum_{j \in T_2} \alpha_j s_j$ is not an identity (i.e., cannot be satisfied for all assignments of the variables $s_1, \dots , s_n \in S$) (definition from \citet[Section~2.3]{chazelle1990ArithmeticLowerBound}).
Most semigroups arising in practice, e.g. $(\N,+)$, $(\N,\max)$, $(\{0,1\}, \text{or})$, have this property. Essentially, this means that a storage scheme must work for \emph{any} weight assignment: if we change the weight assignment, then the storage scheme must still work without changing the subset of points contained in a generator (but of course the total sum of the weights of the points in a generator may change).

\subparagraph{The uncolored lower bound.}  For a point set $P$ in
$\R^d$ of a single color, a storage scheme $\Gamma$ for $P$, and a dominance query
$Q$, let $t(P,\Gamma,Q)$ be the cost of answering query $Q$, i.e. the
smallest number of generators from $\Gamma$ needed to answer
$Q$. Chazelle shows the following:

\begin{lemma}[Chazelle's uncolored lower bound (\citeauthor{chazelle1990ArithmeticLowerBound}, \citeyear{chazelle1990ArithmeticLowerBound}, Theorem~2.1) ]
\label{lem:chazellesLemma}
Let $\eps > 0$ be a constant, let $P$ be a random set of $n$
points in $\R^d$ (each drawn randomly and independently from a uniform distribution on $[0,1]^d$), each with a weight from some faithful semigroup, let
$\Gamma$ be any storage scheme for $P$ of size $m$, and let $Q$ be a
random dominance query. There exists a constant $r$ (depending on
$\eps$) such that, with probability $(1-\eps)$, it holds that
$t(P,\Gamma,Q) \geq r \left( \frac{\log n}{\log (2m/n)}
\right)^{d-1}$.
\end{lemma}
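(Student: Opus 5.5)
The statement is Chazelle's Theorem~2.1, which the paper cites rather than proves. If I had to reprove it, I would follow Chazelle's volume-and-counting strategy in three steps.

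\textbf{Step 1: reduce query cost to covering by generators.} By faithfulness, any set of generators whose weights sum to $w(P\cap Q)$ for \emph{all} weight assignments must consist only of generators $g\subseteq P\cap Q$ that together cover $P\cap Q$. The reason is that a generator containing a point outside $Q$, or a point missed by the cover, would make the equation an identity for a different subset, which faithfulness forbids. So $t(P,\Gamma,Q)$ is the minimum number of generators $g\subseteq Q$ whose union is $P\cap Q$.

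The key geometric observation is that $g\subseteq Q$ holds for a dominance query $Q$ exactly when the coordinatewise maximum $\hat g$ of $g$ is dominated by the corner $q$. Thus each generator is an anchored object: it is usable precisely for the queries whose corner lies in the orthant above $\hat g$. A generator whose points are spread out has $\hat g$ far from its points, and the region of queries for which it is usable is correspondingly constrained.

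\textbf{Step 2: build many disjoint test boxes near the corner.} For a random corner $q$, I would place boxes inside $Q$ that hug the corner $q$. Each box has volume about $\lambda/n$, and their aspect ratios vary geometrically by factors of $(2m/n)^{c}$ per coordinate in the $d-1$ free directions. This gives about $\bigl(\log n/\log(2m/n)\bigr)^{d-1}$ boxes, arranged along a staircase so that they are pairwise essentially disjoint. Standard Chernoff and union-bound estimates over a discretized grid of candidate boxes show that, with probability $1-\eps$, a constant fraction of these boxes contain a point of $P$.

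I then call a generator \emph{bad} if it contains points from two boxes $B_j,B_{j'}$ of sufficiently different aspect ratio. For such a generator, $\hat g$ lies outside the empty-ish region near $q$. Specifically, it has volume at least $(2m/n)^{c}\lambda/n$ and, with high probability, contains points outside $Q$. Hence bad generators cannot be used, and each usable generator covers points from only $O(1)$ boxes. This yields the bound $t\geq r\bigl(\log n/\log(2m/n)\bigr)^{d-1}$.

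\textbf{Step 3: the main obstacle.} The hard part is making the claim "bad generators are rarely usable" hold simultaneously for an \emph{arbitrary} scheme $\Gamma$ of size $m$, which may be chosen adversarially after $P$ is drawn.

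My plan is an averaging argument over $q$ rather than a per-$q$ argument. I would charge each pair (generator, test box) and use the fact that $\Gamma$ has only $m$ generators. The set of corners $q$ for which a fixed generator is both usable and spans two scales has small measure, roughly $n/(2m)$ times the measure of one scale. Summing over all $m$ generators, the expected number of scale pairs bridged at a random $q$ is then bounded by a small constant fraction of the number of scales. This is exactly where the ratio $2m/n$ enters the base of the logarithm.

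Markov's inequality, combined with the high-probability emptiness estimates from Step 2, then gives the $(1-\eps)$ probability statement. The constant $r$ depends on $\eps$ through $\lambda$ and through the Markov threshold.
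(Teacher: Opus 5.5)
The paper does not prove this lemma; it imports it from Chazelle, so citing his Theorem~2.1 is all the paper needs. As a reproof, your Step~2 fails exactly at the step that carries the bound. A generator $g$ meeting two test boxes $B_J,B_{J'}$ anchored at $q$ can perfectly well be usable at $q$. The scheme may simply store $\{p,p'\}$ for some $p\in B_J$, $p'\in B_{J'}$, and the randomness of $P$ places no restriction on which subsets $\Gamma$ contains. Moreover, since $\hat g\succeq p\vee p'$, a usable such $g$ has its corner $\hat g$ \emph{inside} $B_J\cap B_{J'}$, i.e.\ extremely close to $q$, not outside a near-empty region. Nor need it contain any point outside $Q$. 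So ``bad generators cannot be used'' and ``each usable generator meets $O(1)$ boxes'' are false for an individual query, and the bound does not follow from Step~2.

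The averaging you sketch in Step~3 is the right mechanism, but it has to replace Step~2 rather than patch it. What is missing is its precise form and the bookkeeping that turns it into a bound on $t$. Take $B_J=\prod_i[q_i-e_{J,i},q_i]$ with $\prod_i e_{J,i}=\lambda/n$, and place the $d-1$ free log-extents on a grid of spacing $c\log(2m/n)$. If $g$ is usable at $q$ and meets both $B_J$ and $B_{J'}$ (possibly in one point), then $q\in\hat g+\prod_i[0,\min(e_{J,i},e_{J',i})]$. This box is determined by $g,J,J'$ alone. For $J\neq J'$ its volume is $\prod_i\min(e_{J,i},e_{J',i})\le(2m/n)^{-c}\lambda/n$, and summed over $J'\neq J$ it stays $O_d((2m/n)^{-c}\lambda/n)$ for $c$ large. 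Let $\mathrm{Br}(q)$ count pairs (usable generator, pair of distinct boxes it meets). Summing over the $m$ generators, $\mathrm{Br}(q)$ has expectation over a uniform $q$ of $O_d(\lambda(2m/n)^{1-c}B^{d-1})$, where $B^{d-1}$ is the number of boxes. This holds for \emph{every} $P$ and every $\Gamma$ of size $m$, and that is what neutralises the adversarial choice of $\Gamma$. The link you never state is $t(P,\Gamma,Q)\ge N(q)-\mathrm{Br}(q)$, with $N(q)$ the number of nonempty boxes. It holds because every nonempty box is met by a used generator, and a generator meeting $\ell$ boxes contributes $\binom{\ell}{2}\ge\ell-1$ to $\mathrm{Br}$. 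To finish: lower-bound $N(q)$ through the pairwise disjoint private parts of the boxes (each of volume at least $\lambda/(2n)$ once $c$ is large), apply Markov to $\mathrm{Br}$, and choose $c$ large in terms of $d$ and $\eps$. Note that $r$ then depends on $\eps$ through $c$ as well as through $\lambda$.
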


Essentially, there is a high probability that a query $Q$ is
``difficult'' to answer, meaning that we have to add together at least
$r \left( \frac{\log n}{\log (2m/n)} \right) ^{d-1}$ generators.

\subparagraph{The colored lower bound.} We now show how to extend this
lower bound to the colored version of the problem. The input point set
now contains points of multiple colors, and thus in principle we could
have generators consisting of points of multiple colors too. We first
show that a storage scheme can never use such multi-colored
generators, using the assumption that the semi-group is faithful.

\begin{lemma}
  Let $P$ be a random set of $n$ colored points, each with a weight
  from some faithful semigroup, and let $\Gamma$ be any optimal
  storage scheme for $P$. Each generator in $\Gamma$ contains points
  of only a single color.
\end{lemma}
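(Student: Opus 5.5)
The plan is to argue by contradiction and exchange. Suppose an optimal storage scheme $\Gamma$ contains a generator $g$ whose point set $G$ contains points of at least two colors. We show that $g$ can never be used in answering any query. Then $\Gamma \setminus \{g\}$ is still a valid storage scheme of strictly smaller size, with no larger query costs, which contradicts optimality. So every generator in an optimal scheme is monochromatic.

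First we make precise what it means to answer a colored query in the arithmetic model. For a query $Q$ and each color $c$ appearing in $Q$, the reported value $w(P_c \cap Q)$ must be obtained as a sum of values of generators from $\Gamma$. Since the semigroup is faithful, the scheme must work for every weight assignment, so the value is computed as a formal sum $\sum_j \alpha_j w(g_j)$ with positive integer multiplicities. This sum has to equal $\sum_{p \in P_c\cap Q} w(p)$ identically in the point weights $w(p)$.

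Now suppose the multi-colored generator $g$ occurs in such a sum for color $c$. Then $G$ contains some point $p$ of a color $c' \neq c$. Expand the left-hand side into a sum over points with positive integer coefficients. The variable $w(p)$ then appears with a positive coefficient on the left but not at all on the right, because $p \notin P_c$. The two sides therefore have different supports, $T_1 \neq T_2$, and both are nonempty. By the definition of a faithful semigroup, the equation is not an identity. Hence some weight assignment exists on which the scheme reports a wrong answer, contradicting that $\Gamma$ is a storage scheme. So $g$ is never used, and removing it leaves a valid scheme of smaller size.

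The main obstacle is the bookkeeping step of turning a combination of generators into a single equation of the exact form in the faithfulness definition. This requires collecting the coefficients of each point variable and checking that they are positive integers, that the supports differ, and that both sides are nonempty. A secondary point is how optimality is meant. If optimality means minimal size among schemes achieving the same query costs, the removal argument gives the contradiction directly. If some other notion is intended, we would instead state the conclusion as: every multi-colored generator is useless, so it can be discarded without loss.
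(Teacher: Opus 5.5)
Your proposal is correct and follows the same route as the paper. Both argue that a generator containing a point $p$ of a color $c' \neq c$ cannot be used to compute the weight of color $c$, because by faithfulness some weight assignment (varying $w(p)$) makes the computed sum wrong, contradicting that $\Gamma$ is a valid storage scheme. Your version is a bit more careful than the paper's: you state the contradiction through the differing supports $T_1 \neq T_2$, and you make explicit the final step, left implicit in the paper, that such an unusable generator can be deleted, contradicting optimality.
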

\begin{proof}
Recall that faithfulness implies that a storage scheme must work for any assignment of weights to the points. Assume that there exists some storage scheme $\Gamma$ that, to compute the weight of some color $c_1$ in some query $Q$, uses a generator $\gamma$ that contains (among others) a point $p$ of a different color $c_2$. Then if we change the weight of point $p$, e.g. by doubling its weight, the total weight of color $c_1$ computed by storage scheme $\Gamma$ also changes, even though the actual weight of all points of color $c_1$ did not change. This means that storage scheme $\Gamma$ does not work for all weight assignments, and is thus not a proper storage scheme; a contradiction. We conclude that, to compute the weight of some color $c_1$ in a query $Q$, a storage scheme may only use generators that exclusively contain points of color $c_1$. The lemma follows.
\end{proof}

Since each generator contains points of only a single color, we can
view the generators as being colored too. If we group the generators
by color, we observe that a colored storage scheme for points of
$\col$ colors is simply a collection of $\col$ uncolored storage
schemes. Similar to the uncolored case, let $t(P, \Gamma, Q)$ be the
cost of answering query $Q$ using storage scheme $\Gamma$ on the
colored point set $P$. The above lemma implies that
$t(P,\Gamma, Q) = \sum_c t(P_c,\Gamma_c, Q)$; the weights for each
color $c$ must be summed independently, using the generators
$\Gamma_c$ of color $c$, so the total query cost is simply the sum of
the query cost for all colors. Essentially, the problems are entirely
separate, and thus \cref{lem:chazellesLemma} applies to each color separately.

We first prove the following technical lemma, which we then use in our
lower bound.

\begin{lemma}
  \label{lem:probabilityHalfSuccesses}
  Consider a Bernoulli process with $n$ independent trials, each of
  which succeeds with probability $p$. Let $X$ be a random variable
  denoting the number of successes. If $p \geq 0.5$, then
  $\P(X \geq n/2) \geq p$.
\end{lemma}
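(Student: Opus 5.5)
The plan is to study the function $g(p) = \P(X \geq n/2)$ on the interval $[1/2,1]$ and show that it lies above the diagonal $p \mapsto p$ there. Two facts will do this: $g(1/2) \geq 1/2$ and $g(1) = 1$, together with concavity of $g$ on $[1/2,1]$. A concave function lies above the chord joining its endpoint values. The chord from $(1/2, g(1/2))$ to $(1,1)$ lies above the segment from $(1/2,1/2)$ to $(1,1)$, which is the diagonal. This gives $g(p) \geq p$ for all $p \in [1/2,1]$.

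\emph{Endpoint values.} Let $m = \lceil n/2 \rceil$, so that $g(p) = \P(X \geq m)$. Clearly $g(1) = 1$. At $p = 1/2$ the distribution of $X$ is symmetric around $n/2$, so $\P(X \geq n/2) = \P(X \leq n/2)$. These two events together cover the whole probability space, so their probabilities sum to at least $1$. Hence $g(1/2) \geq 1/2$.

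\emph{Concavity.} I would use the standard identity for the derivative of a binomial tail:
\[
g'(p) = n\binom{n-1}{m-1} p^{m-1}(1-p)^{n-m}.
\]
This can be checked by differentiating the sum term by term, which telescopes, or by viewing $\P(X \geq m)$ as the probability that the $m$-th order statistic of $n$ uniform variables is at most $p$.

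For $n \geq 2$, take the derivative of $\log g'(p)$:
\[
\frac{m-1}{p} - \frac{n-m}{1-p}.
\]
This expression is decreasing in $p$ and vanishes at $p^* = (m-1)/(n-1)$. So $g'$ increases up to $p^*$ and decreases afterwards. I then check that $p^* \leq 1/2$:
\begin{itemize}
\item if $n$ is odd, $m-1 = (n-1)/2$, so $p^* = 1/2$;
\item if $n$ is even, $m-1 = n/2 - 1 < (n-1)/2$, so $p^* < 1/2$.
\end{itemize}
Hence $g'$ is non-increasing on $[1/2,1]$, i.e.\ $g$ is concave there. The degenerate case $n=1$ gives $m=1$ and $g(p)=p$ directly.

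\emph{Main obstacle.} The step needing the most care is the derivative identity and the unimodality of $g'$. I would verify the identity by the order-statistic argument: $X \geq m$ if and only if at least $m$ of $n$ i.i.d.\ uniforms $U_i$ satisfy $U_i \leq p$, i.e.\ the $m$-th smallest $U_i$ is at most $p$. Its density is exactly the displayed expression. The rest is the elementary concavity-above-chord argument described at the start.
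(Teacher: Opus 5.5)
Your proof is correct, but it takes a different route from the paper. The paper's argument is purely algebraic. It conditions on the first trial and rewrites the claim as $(1-p)\P(Y' \leq n/2-1) \geq p\,\P(Y < n/2-1)$, where $Y$ and $Y'$ count the successes and failures among the remaining $n-1$ trials. It then compares the two resulting binomial sums term by term, and each comparison reduces to $p^{n-2i-2} \geq (1-p)^{n-2i-2}$, i.e.\ to $p \geq 1-p$. You instead treat $g(p)=\P(X \geq \lceil n/2\rceil)$ as a function of $p$. You combine the endpoint values $g(1/2) \geq 1/2$ (by symmetry) and $g(1)=1$ with concavity of $g$ on $[1/2,1]$. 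That concavity comes from the Beta-density formula for $g'$ and the fact that its mode $(m-1)/(n-1)$ is at most $1/2$. The paper's route is self-contained and needs nothing beyond the binomial expansion. Yours imports the derivative identity for binomial tails, which you justify adequately (the telescoping computation also checks it directly). In exchange, it identifies the structural reason the threshold $n/2$ works, namely where the mode of $g'$ sits relative to $1/2$. It also gives more: the chord bound $g(p) \geq g(1/2) + (2p-1)\bigl(1-g(1/2)\bigr)$ is strictly stronger than $g(p) \geq p$ for even $n$ and $p<1$, since then $g(1/2) = \tfrac12 + \tfrac12\P(X=n/2) > \tfrac12$. One small point of wording: for $n=2$ you have $m=1$, and the logarithmic derivative $-1/(1-p)$ never vanishes on $(0,1)$, so ``vanishes at $p^*$'' is not literally true there. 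Your conclusion that $g'$ is non-increasing on $[1/2,1]$ still holds, because the expression is negative throughout.
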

\begin{proof}
We look at the first trial separately. Let $Y$ be a random variable denoting the number of successes in the remaining $n-1$ trials, and let $Y' = n - 1 - Y$ be the number of failures in the remaining trials. We wish to show that:

\begin{align}
\P(X \geq n /2) & \geq p  \\
p\P(Y \geq n/2 - 1) + (1-p) \P(Y \geq n/2) & \geq p \\
(1-p) \P(Y \geq n/2) & \geq p (1 - \P(Y \geq n/2 - 1)) \\
(1-p) \P(Y' \leq n/2 - 1) & \geq p \P(Y < n/2 - 1) \\
(1-p) \sum_{i = 0}^{\lfloor n/2 \rfloor - 1} \binom{n-1}{i} p^{n - i - 1} (1-p)^{i} & \geq p \sum_{i = 0}^{ \lceil n/2 \rceil - 2} \binom{n-1}{i} p^{i} (1-p)^{n - i - 1}\\ 
\sum_{i = 0}^{\lfloor n/2 \rfloor - 1} \binom{n-1}{i} p^{n - i - 1} (1-p)^{i + 1} & \geq \sum_{i = 0}^{ \lceil n/2 \rceil - 2} \binom{n-1}{i} p^{i + 1} (1-p)^{n - i - 1}
  \label{eq:final_equation}
\end{align}

Explanation of steps:
\begin{description}
    \item[(1) -- (2):] if the first trial succeeds (with probability $p$) we only need $n/2-1$ more successes in $Y$, otherwise (with probability $(1-p)$) we need $n/2$ more successes. 
    \item[(2) -- (3):] rewriting.
    \item[(3) -- (4):] the probabilities $\P(Y \geq n/2)$ and $\P(Y' \leq n/2 - 1)$ are equivalent (by definition of $Y'$), and the probabilities $\P(Y \geq n/2-1)$ and $\P(Y < n/2 - 1)$ are complementary and add up to $1$.
    \item[(4) -- (5):] summing over probability of getting exactly $i$ failures for all integers $0 \leq i \leq n/2 - 1$ on the left, and the probability of getting exactly $i$ successes for each integer $0 \leq i < n/2 - 1$ on the right.
    \item[(5) -- (6):] distributing the $(1-p)$ and $p$ factors over the summands.
\end{description}

The above equations are all equivalent. In order to show that \cref{eq:final_equation} holds we compare each summand separately. If we can show that for each $0 \leq i \leq \lceil n/2 \rceil -2$ the left hand side is larger than the right hand side, this certainly also holds for the whole sum; the left hand side even has one extra summand if $n$ is even, which we can ignore since it is non-negative. We thus wish to show that:

\begin{align}
\binom{n-1}{i} p^{n - 1 - i} (1-p)^{i + 1} & \geq \binom{n-1}{i} p^{i+1} (1-p)^{n - 1 - i} \\
 p^{n -2i - 2} & \geq (1-p)^{n -2i -2} \\
 p & \geq (1-p) 
\end{align}

\begin{description}
    \item[(7) -- (8):] rewriting and canceling of terms
    \item[(8) -- (9):] $n - 2i - 2$ is positive for $0 \leq i \leq \lceil n/2 \rceil -2$, so Equation (9) implies Equation (8).
\end{description}

Equation (9) follows from the assumption that $p \geq 0.5$. The lemma follows.
\end{proof}

We can now prove the main result of this section:

\begin{theorem}
  \label{lem:lower_bound}
  Let $\eps > 0$ be a constant, let $P$ be a random set of $n$
  points in $\R^d$ of $\col$ colors such that there are $n/\col$ points
  of each color, each with a weight from some faithful semigroup, let
  $\Gamma$ be any storage scheme for $P$ of size $m$, and let $q$ be a
  random point defining a dominance query $Q$. There exists a constant
  $R$ (depending on $\eps$) such that, with probability at least
  $(1-\eps)$, it holds that
  $t(P,\Gamma,Q) \geq R\col \left( \frac{\log (n / \col)}{\log (2m / n)} \right)
  ^{d-1}$.
\end{theorem}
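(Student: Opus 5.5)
The plan is to reduce to Chazelle's uncolored bound (\cref{lem:chazellesLemma}) color by color, and then use \cref{lem:probabilityHalfSuccesses} to show that a constant fraction of the colors are simultaneously ``hard'' for the random query $Q$. By the lemma on single-colored generators, $\Gamma$ splits into $\col$ uncolored storage schemes $\Gamma_1,\dots,\Gamma_\col$, one per color. We then have
\[
t(P,\Gamma,Q)=\sum_c t(P_c,\Gamma_c,Q).
\]
Let $m_c=|\Gamma_c|$, so that $\sum_c m_c = m$. Each $P_c$ is a random set of $n_c = n/\col$ uniform points, and $Q$ is a random dominance query. Applying \cref{lem:chazellesLemma} with some constant $\eps'$ and its constant $r=r(\eps')$ gives the following: for each $c$, with probability at least $1-\eps'$,
\[
t(P_c,\Gamma_c,Q)\geq r\left(\frac{\log(n/\col)}{\log(2m_c\col/n)}\right)^{d-1}.
\]

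The first obstacle is that the $m_c$ differ per color, so we need an averaging step. By Markov's inequality, at least $\col/2$ colors satisfy $m_c\leq 2m/\col$. For these colors $2m_c\col/n \leq 4m/n \leq (2m/n)^2$, using $m\geq n$, which holds because every point must lie in some generator. Hence for such a color
\[
\log(2m_c\col/n)\leq 2\log(2m/n),
\]
and the per-color bound becomes $r2^{-(d-1)}\left(\log(n/\col)/\log(2m/n)\right)^{d-1}$. A degenerate case is a color with $m_c$ very small, where Chazelle's bound would have to be read with $\log(2m_c\col/n)$ near or below $1$. I would handle it by noting that Chazelle's bound is monotone in $m$, so we may replace $m_c$ by $\max(m_c,n/\col)$; padding a scheme with useless generators does not change query costs.

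The second obstacle is the ``with probability $1-\eps$'' claim for a single shared query $Q$. The events ``color $c$ is hard'' are not independent, since they share $Q$. I would condition on $q$: given $q$, the sets $P_c$ are independent. The difficulty is that Chazelle's lemma is stated for a random pair $(P,Q)$, not for a fixed $q$. My fallback is to work with the joint event directly. Let $H$ be the set of the at least $\col/2$ colors with small $m_c$, and let $X$ be the number of hard colors in $H$. Then $\mathbb{E}[X]\geq(1-\eps')|H|$, and Markov's inequality applied to $|H|-X$ gives
\[
\P\bigl(X\geq |H|/2\bigr)\geq 1-2\eps'.
\]
Choosing $\eps'=\eps/2$ yields probability at least $1-\eps$ that at least $\col/4$ colors are hard. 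This avoids needing independence at all. If independence were available, one would instead apply \cref{lem:probabilityHalfSuccesses} with $p=1-\eps'\geq 1/2$.

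On that event,
\[
t(P,\Gamma,Q)\geq \frac{\col}{4}\, r2^{-(d-1)}\left(\frac{\log(n/\col)}{\log(2m/n)}\right)^{d-1},
\]
so $R=r(\eps/2)/2^{d+1}$ works. The main obstacle is the correlation through the shared query. I would try the Markov/expectation argument first, since it needs no independence, and present \cref{lem:probabilityHalfSuccesses} as the route via conditioning on $q$.
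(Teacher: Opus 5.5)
Your argument is correct, and it differs from the paper's in both of its main steps.

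\textbf{Unequal $m_c$.} The paper assumes all $m_c$ are equal and justifies this by convexity of $m_c \mapsto (\log(2m_c\phi/n))^{-(d-1)}$. You instead use pigeonhole: at least $\phi/2$ colors have $m_c \le 2m/\phi$, and $\log(4m/n) \le 2\log(2m/n)$ once $m \ge n$. This costs a factor $2^{-(d-1)}$, which $R$ absorbs since $d = O(1)$, and avoids the calculus.

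\textbf{Half the colors hard.} The paper applies the Bernoulli-process lemma with $p = 1-\varepsilon$ and ends with $R = r/2$. You use linearity of expectation plus Markov's inequality on the number of non-hard colors. This halves the guaranteed number of hard colors and forces $\varepsilon' = \varepsilon/2$, again absorbed into $R$. In exchange you need no independence. The Bernoulli lemma is stated for independent trials, but the events ``$Q$ is hard for color $c$'' all share the same random $q$. Conditioning on $q$ does not help, because Chazelle's bound concerns a random $q$; for $q$ near the origin every color is easy. Without independence, knowing the expected number of hard colors is at least $(1-\varepsilon)\phi$ only guarantees roughly probability $1-2\varepsilon$ that half are hard. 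That is exactly what your Markov step delivers, so your version repairs a step the paper glosses over.

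\textbf{Two small points.}
\begin{enumerate}
\item Your reason for $m \ge n$ (``every point must lie in some generator'') is not enough, since one generator may contain many points. The correct reason: the query cornered at a point $p$ of color $c$ needs a generator of color $c$ that contains $p$ and lies in the dominance region of $p$. For distinct points these generators are distinct, since $p'$ in the region of $p$ and $p$ in the region of $p'$ forces $p = p'$. Hence $m_c \ge n/\phi$ for every $c$, which also makes your padding step unnecessary.
\item Since $\Gamma$ is chosen for the random $P$, the set $H$ is itself random, so $\mathbb{E}[X] \ge (1-\varepsilon')|H|$ is not immediate; the paper treats the $m_c$ as fixed in the same way. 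A uniform threshold fixes this:
\begin{enumerate}
\item replace every $\Gamma_c$ with $|\Gamma_c| > 2m/\phi$ by the singleton scheme, of size $n/\phi \le 2m/\phi$;
\item apply Chazelle's lemma with size bound $2m/\phi$ to every color, conditioning on the other colors' points;
\item use Markov to get, with probability at least $1-4\varepsilon'$, at least $3\phi/4$ hard colors, of which at least $\phi/4$ lie in $H$.
\end{enumerate}
\end{enumerate}
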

\begin{proof}
Assume that $(1-\eps) \geq 0.5$; otherwise the result follows from using the lemma with $(1-\eps) = 0.5$. 

As we noted before, a storage scheme on $P$ consists of $\col$
separate storage schemes, one for each color $c$. For each of these
schemes $\Gamma_c$, \Cref{lem:chazellesLemma} holds (using the same
value for $\eps$). Let $q$ be a random query point defining a dominance query $Q$, and consider a
color $c$. Let the size of the storage scheme for color $c$ be $m_c$,
with $\sum_c m_c = m$. \Cref{lem:chazellesLemma} tells us that, with
probability $(1-\eps)$, $Q$ will be a ``difficult'' query for color $c$,
i.e.
$t(P_c,\Gamma,Q) \geq r_c \left( \frac{\log (n / \col)}{\log (2m_c/(n / \col))} \right) ^{d-1}$
(since each color contains $n/\col$ points). This holds for every color $c$. However,
$Q$ does not need to be ``difficult'' for every color; it is sufficient
if it is ``difficult'' for, say, half the colors, which by
\Cref{lem:probabilityHalfSuccesses} happens with probability at least
$(1-\eps)$. Let $\hat{C}$ be this set of difficult colors.

The cost to answer query $Q$ is thus
$t(P,\Gamma,Q) \geq \sum_{c \in \hat{C}} r_c\left( \frac{\log (n / \col)}{\log (2m_c/(n / \col))} \right) ^{d-1}$. Let $R' = \min_{c \in \hat{C}} r_c$
be the lowest among the constants for all difficult colors; then
$$t(P,\Gamma,Q) \geq \sum_{c \in \hat{C}} r_c \left( \frac{\log (n / \col)}{\log (2m_c/(n / \col))} \right) ^{d-1} \geq R'\sum_{c \in \hat{C}} \left( \frac{\log (n / \col)}{\log (2m_c/(n / \col))} \right) ^{d-1}$$. Assume for now that
the $m_c$'s are distributed evenly, that is, the number of generators
for each color is $m / \col$. That gives total query cost
$$t(P,\Gamma,Q) \geq R'\sum_{c \in \hat{C}} \left( \frac{\log (n / \col)}{\log (2(m / \col )/(n / \col))} \right) ^{d-1} \hspace{-1em}= R'\sum_{c \in \hat{C}} \left( \frac{\log (n / \col)}{\log (2m / n )} \right) ^{d-1} \hspace{-1em}\geq R'\col/2 \left( \frac{\log (n / \col)}{\log (2m / n )} \right) ^{d-1}$$. 
Choosing $R = R'/2$ gives the claimed result.

We now show that we did not lose generality because of the assumption
that the $m_c$'s are divided evenly. Since we want to prove a lower
bound, we want our statement to hold for any distribution of $m_c$'s,
and in particular also for the distribution for which $t(P,\Gamma,Q)$
is the smallest; we thus have to prove that dividing the $m_c$'s
evenly minimizes $t(P,\Gamma,Q)$. 

Let $L_c = \log (2m_c/(n / \col))$ and let $f(m_c) = \left( \frac{1}{L_c} \right) ^{d-1}$, and note that minimizing $\sum_c f(m_c)$ also minimizes $R' \sum_c \left( \frac{\log (n / \col)}{\log (2m_c/(n / \col))} \right) ^{d-1}$ \. The function $f(m_c)$ is convex w.r.t. $m_c$: the second derivative $f''(m_c) = \frac{(d-1) L_c^{-d-1} (d + L)}{m_c^2}$ is non-negative for any $m_c > \frac{n}{2\col}$ (the case where $m_c < \frac{n}{\phi}$ is irrelevant since we have $\frac{n}{\phi}$ points per color, and thus need at least $\frac{n}{\phi}$ generators per color). It follows that $\sum_c f(m_c)$ is convex too, and thus that choosing $m_c = m / \col$ for all colors $c$ indeed minimizes $\sum_c f(m_c)$.
\end{proof}

Since our bound is in the arithmetic model, it essentially gives a
lower bound on the number of values that we need to add together to
compute an answer. The real-valued pointer machine model is weaker than the
arithmetic model, so the lower bound directly holds there. In
contrast, in the word-RAM model we can encode multiple values in a
single word, and sum them up using a single operation, so the lower
bound does not hold there.

\section{A transformation for reducing space in 2D}
\label{sec:transformation}

For certain values of $\col$ and $s$ we can reduce the space usage in Lemma~\ref{lem:2d_dominance} for 2D dominance queries using one
level of ``bootstrapping''. The main observation is that using the
structure for weighted color frequency queries in $\R^3$, we can
actually support simultaneously answering a dominance query in $\R^1$
on a prefix $P_1,\dots,P_j$ of $t$ point sets in $\R^1$. This allows us
to reduce the space used at every node in the structure of
Lemma~\ref{lem:2d_dominance}.

\subparagraph{Querying multiple sets simultaneously.} Let $P_1,\dots,P_t$
be $t$ sets of points in $\R^1$, and let $n$ denote the total number
of points in $P=P_{\leq t} = \bigcup_{j \leq t} P_j$. Following the
same strategy as \cite{bozanis1995generalizedIntersectionSearching}, consider a point set $P_j$ and a color $c$, and let $x_i$ be the $i^\textrm{th}$ point of color $c$ in $P_j$. We map point $x_i$ to the point
$(x_i,x_{i+1},j) \in \R^3$, and associate it with weight $i$. We do this for every point set $P_j$ and every color $c$. Let $\tilde{P}$ be the resulting set of points. A quadrant of the form
$(-\infty, x] \times [y,\infty) \times (-\infty, i]$ is
\emph{$k$-shallow} with respect to a set $\tilde{P}$ if it contains at
most $k$ points of $\tilde{P}$. We then observe:

\begin{lemma}
  \label{obs:shallow}
  For any $1 \leq i \leq t$ and any $x,y$, the query
  $\tilde{Q} = (-\infty,x] \times [y,\infty) \times (-\infty,i]$ is
  $i\phi$-shallow with respect to $\tilde{P}$.
\end{lemma}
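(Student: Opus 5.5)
The plan is to bound the number of points of $\tilde{P}$ in $\tilde{Q}$ by partitioning $\tilde{P}$ into groups indexed by pairs $(j,c)$, with $1 \leq j \leq t$ and $c \in \{1,\dots,\col\}$. Group $(j,c)$ consists of the lifted points $(x_l, x_{l+1}, j)$ coming from the points $x_1 < x_2 < \dots$ of color $c$ in $P_j$. I will then show that $\tilde{Q}$ meets at most one point of each group, and meets no group with $j > i$.

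First I discard all groups with $j > i$. Every point of such a group has third coordinate $j > i$, so it is not in $(-\infty,i]$ and hence not in $\tilde{Q}$. This leaves at most $i\col$ groups, so it suffices to show that $\tilde{Q}$ contains at most one point from each remaining group.

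Next I fix a group $(j,c)$ with $j \leq i$. Its lifted points $(x_l,x_{l+1})$, with the convention $x_{l+1} = \infty$ for the last point, form a chain that is strictly increasing in both coordinates. A point of the chain lies in $(-\infty,x] \times [y,\infty)$ exactly when $x_l \leq x$ and $x_{l+1} \geq y$. The first condition selects a prefix $l \leq l^*$, where $x_{l^*}$ is the last point of color $c$ in $P_j$ that is at most $x$. The second condition selects a suffix $l \geq l_*$. The intersection is therefore the contiguous index range $[l_*, l^*]$.

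The crux is to show that this range has length at most one. This is exactly the observation of \cite{bozanis1995generalizedIntersectionSearching} reviewed in Section~\ref{sec:Preliminaries}: the only point selected is the rightmost point of color $c$ that is at most $x$, since its successor exceeds $x$. I would make this precise as follows. Suppose two indices $l < l'$ both lie in the range. Then $x_{l+1} \leq x_{l'} \leq x$, while also $x_{l+1} \geq y$. I then use how the transformation is queried: the second coordinate bound is taken at the same value as the first, as in the quadrant $(-\infty,q]\times[q,\infty)$ of Section~\ref{sec:Preliminaries}. With that, $x_{l+1} \geq y$ and $x_{l+1} \leq x$ force $x_{l+1} = x$ in the degenerate case, and otherwise the strict inequality $x_{l+1} > x$ is violated.

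This step is the main obstacle. For a query corner with $y$ well below $x$, the range $[l_*,l^*]$ can contain many indices. So the argument has to invoke precisely the relation between $x$ and $y$ under which the quadrants are issued; without it the count cannot be controlled. I would make this relation explicit, identifying $y$ with the query value as in the Bozanis mapping, and break ties by the general-position assumption.

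Summing one point per group over the at most $i\col$ groups with $j \leq i$ then gives $|\tilde{Q} \cap \tilde{P}| \leq i\col$, that is, $\tilde{Q}$ is $i\col$-shallow.
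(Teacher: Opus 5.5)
Your argument is the paper's own: at most one lifted point per pair $(j,c)$ with $j \le i$, and none with $j>i$, where the ``at most one'' step is the Bozanis observation. Your caveat is also correct, and the paper's one-line proof glosses over it. For arbitrary $x,y$ the statement is false: a single colour class $\{1,2,3\}$ in $P_1$ lifts to $(1,2,1),(2,3,1),(3,\infty,1)$, all three inside $(-\infty,3]\times[0,\infty)\times(-\infty,1]$. So the lemma needs $y \ge x$; in particular $y=x$ suffices, which is the only case the paper uses, via the corner $(q,q,j)$. You also need general position to exclude the tie $x_{l+1}=x=y$, or you can accept at most two points per group, which only doubles the shallowness bound. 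Under these conditions your proof is complete.
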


\begin{proof}
  For every $j \leq i$, $\tilde{Q}$ contains at most one point from
  $\tilde{P}_{j,c}$ for every color $c$; for $j > i$, $\tilde{Q}$ contains no points.
\end{proof}

Set $r=t\phi$ and build an $r$-shallow cutting for such quadrants on
$\tilde{P}_{\leq s}$ \citep{afshani18optim_deter_shall_cuttin_domin_ranges}. An
$r$-shallow cutting $\Xi$ is a set of $O(n/r)$ quadrants (of the above
form) such that (i) every $k\leq r$-shallow quadrant is contained in a
quadrant in $\Xi$, and (ii) every quadrant $\gamma \in \Xi$ contains
at most $O(r)$ points $\tilde{P}_\gamma$ from $\tilde{P}$. This set
$\tilde{P}_\gamma$ is the \emph{conflict list} of $\gamma$. One can
compute an $r$-shallow cutting of size $O(n/r)$ and process it so that
given a query point $q$ we can find a quadrant of $\Xi$ that contains
it (if such a quadrant exists) in $O(n\log n)$
time \citep{afshani18optim_deter_shall_cuttin_domin_ranges}.

We build a data structure for 3D weighted color frequency reporting
queries on the conflict list $\tilde{P}_\gamma$ for every quadrant
$\gamma \in \Xi$; say this uses $S(n)$ space for $n$ points, with $O(Q(n)+k)$ query time. Hence, the total space used is $O((n/r)S(r))$.

Now consider a query, consisting of an integer $j \in \{1,\dots,t\}$ and a
range $(-\infty, q]$. Observation~\ref{obs:shallow} guarantees that
the quadrant
$\tilde{Q}=(-\infty,q] \times [q,\infty) \times (-\infty,j]$ is
contained in a quadrant $\gamma \in \Xi$. Hence, querying with the
corner point $(q,q,j)$ we we find such a quadrant $\gamma$ in $O(\log
(n/r))$ time. We then query its associated data structure in $O(Q(r,k))$
time. Since $\gamma$ contains $\tilde{Q}$, any point contained in
$\tilde{Q}$ is contained in $\gamma$, and none of the points outside
of $\gamma$ can be contained in $\tilde{Q}$. Moreover, for each color
$c$; the total number of points in $P_\leq j$ is exactly the total
weight of the points in $\tilde{Q} \cap \tilde{P}_{\leq j, c}$. 

\begin{lemma}
  \label{lem:prefix_queries}
  Given an $S(n)$ space data structure that supports weighted color
  frequency queries for dominance ranges in $\R^3$ in $O(Q(n)+k)$
  time, we can build a data structure of size $O((n/t\phi)S(t\phi))$
  that stores $t$ sets of points $P_1,\dots,P_t$ in $\R^1$ of total size
  $n$. We can answer color frequency reporting queries for dominance
  ranges $(-\infty,q]$ in any $P_{\leq j} = \bigcup_{i \leq j} P_i$ in
  $O\left(\log \frac{n}{t\phi} + Q(t\phi) + k \right)$ time.
\end{lemma}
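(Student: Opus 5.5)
The plan is to package the construction described just before the statement into a data structure and then verify its correctness, space, and query time.

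\emph{Lifting.} For each set $P_j$ and each color $c$, let $x_1<x_2<\dots$ be the points of color $c$ in $P_j$. Map $x_i$ to $(x_i,x_{i+1},j)\in\R^3$, using $x_{i+1}=\infty$ if $x_i$ is the last point, and give it weight $i$. Call the resulting set $\tilde P$; it has $n$ points.

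\emph{Shallow cutting.} Set $r=t\phi$. By Lemma~\ref{obs:shallow}, every query quadrant $\tilde Q=(-\infty,q]\times[q,\infty)\times(-\infty,j]$ with $j\le t$ is $j\phi$-shallow, and hence $r$-shallow. Build an $r$-shallow cutting $\Xi$ for such quadrants on $\tilde P$ \citep{afshani18optim_deter_shall_cuttin_domin_ranges}. It consists of $O(n/r)$ quadrants, each with a conflict list of size $O(r)$, and comes with a point-location structure of size $O(n/r)$. On each conflict list $\tilde P_\gamma$ we store the given 3D weighted color frequency structure.

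\emph{Space.} To obtain exactly $O(S(r))$ per quadrant, I would pad each conflict list to size at most $Cr$ and use that $S(Cr)=O(S(r))$ for the polynomially bounded space functions of interest. This is the one mild assumption I expect to need, and I would state it explicitly. The total space is then $O(n/r)\cdot O(S(r)) + O(n/r) = O((n/t\phi)S(t\phi))$, using $S(m)\ge m$.

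\emph{Query.} Given $j$ and $(-\infty,q]$, locate a quadrant $\gamma\in\Xi$ containing $\tilde Q$ by querying with the corner $(q,q,j)$; this takes $O(\log(n/r))$ time. Then query the structure of $\gamma$ with $\tilde Q$ and report its output, which costs $O(Q(t\phi)+k)$ time. The bound $Q(O(r))=O(Q(r))$ is handled by the same padding remark.

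\emph{Correctness.} Since $\tilde Q\subseteq\gamma$, we have $\tilde Q\cap\tilde P=\tilde Q\cap\tilde P_\gamma$, so restricting to the conflict list loses nothing. Fix a color $c$ and an index $j'\le j$. By the argument of \cite{bozanis1995generalizedIntersectionSearching}, the quadrant $\tilde Q$ contains a point of $\tilde P_{j',c}$ if and only if $P_{j'}$ has a point of color $c$ at most $q$. In that case it contains exactly one such point, namely the image of the rightmost such point $x_i$, whose weight $i$ equals $|P_{j',c}\cap(-\infty,q]|$. Points with third coordinate greater than $j$ are excluded from $\tilde Q$. Summing the weights of color $c$ in $\tilde Q$ therefore gives $|P_{\le j,c}\cap(-\infty,q]|$. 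In addition, a color is reported exactly when it appears in $P_{\le j}\cap(-\infty,q]$, so the number of reported colors is precisely $k$.

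\emph{Main obstacle.} I expect the main delicacy to be the shallow cutting itself: whether it applies to quadrants of the mixed orientation $(-\infty,x]\times[y,\infty)\times(-\infty,z]$, and whether its guarantee holds for every $r$-shallow query. Flipping the sign of the second coordinate reduces these quadrants to standard 3D dominance ranges, to which the cited construction applies, so I would include that reduction explicitly.
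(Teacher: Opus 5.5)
Your proposal is correct and follows the paper's own construction essentially step for step: the lift $x_i\mapsto(x_i,x_{i+1},j)$ with weight $i$, an $r$-shallow cutting with $r=t\phi$ justified by Lemma~\ref{obs:shallow} (which you rightly apply only to quadrants with $x=y=q$, the case where it actually holds), a 3D weighted structure on each conflict list, and point location with the corner $(q,q,j)$. Your additional details are points the paper leaves implicit: the regularity assumptions $S(O(r))=O(S(r))$ and $Q(O(r))=O(Q(r))$, and negating the second coordinate to turn these quadrants into standard 3D dominance ranges for the cited shallow-cutting construction.
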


Note that unfortunately we cannot easily extend this idea to
simultaneously answer dominance queries on a prefix $P_1,\dots,P_j$ of
points in $\R^2$, as shallow cuttings in $\R^4$ use quadratic space.

\subparagraph{Reducing the space for dominance queries in $\R^2$.} For
points in $\R^2$ and parameter $s$, we now follow the same recursive
approach as in Theorem~\ref{lem:2d_dominance},
recursively partitioning the set of points using $t=s$ vertical strips
$S_1,\dots,S_s$. However, we now build the data structure from
\Cref{lem:prefix_queries} on (the $y$-coordinates of) the point sets $P_j = S_j \cap P$, instead
of constructing $s$ one-dimensional structures on (the $y$-coordinates of
the) points in each $L_j = \bigcup_{i<j} P_i$. At every level we thus
use $O\left(\frac{S(s\phi)n}{s\phi}\right)$ space, and thus
$O\left(\frac{S(s\phi)n\log_s n}{s\phi}\right)$ space in total.

Consider a dominance query whose corner point lies in strip $S_j$. We
can now compute the contribution of the points in strips
$S_1,\dots,S_{j-1}$ in $O(\log(n/(s\phi)) + Q(s\phi) + k)$ time by
querying the Lemma~\ref{lem:prefix_queries} data structure. Therefore, by doing this on all $O(\log_s n)$ levels of the tree, 
the total query time becomes
$O(\log n + \log_s n (k+\log(n/(s\phi)) + Q(s\phi)))$.

\begin{lemma}
  \label{lem:shallow_data_structure}
  Let $P$ be a set of $n$ points in $\R^2$, each of which has a color from $\{1 \dots \col\}$, and let $2 \leq s \leq n$ be a parameter. Given an
  $S(n)$ space data structure that supports weighted color frequency
  queries for dominance ranges in $\R^3$ in $O(Q(n)+k)$ time, then we
  can build a data structure of size
  $O\left(\frac{S(s\phi)n\log_s n}{s\phi}\right)$ that can answer a
  color frequency reporting query for a dominance range in
  $O(\log n + \log_s n(k+\log(n/(s\phi)) + Q(s\phi)))$ time.
\end{lemma}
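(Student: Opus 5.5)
The plan is to assemble the lemma from the $s$-ary strip tree of Lemma~\ref{lem:2d_dominance}, replacing its per-strip one-dimensional structures $\L_i$ by one structure from Lemma~\ref{lem:prefix_queries} per node. Concretely, I build a balanced $s$-ary tree on the $x$-coordinates of $P$; it has height $O(\log_s n)$. At a node $\nu$ with child strips $S_1,\dots,S_s$, I let $P_j$ be the set of $y$-coordinates of $P_\nu \cap S_j$. On $P_1,\dots,P_s$ I build the structure of Lemma~\ref{lem:prefix_queries} with $t=s$, instantiated with the given $S(\cdot)$-space, $O(Q(\cdot)+k)$-time 3D weighted structure. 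Its conflict-list structures use weights from the additive group of integers, so the 3D structure is applied in exactly the weighted setting it is assumed to support. I also reuse the global linked list of color counters $\mu_c$ from Lemma~\ref{lem:2d_dominance}: every stored point carries a pointer to its counter, so merging the answers of different nodes costs $O(1)$ per reported color.

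\textbf{Space.} A node $\nu$ with $n_\nu$ points uses $O((n_\nu/(s\phi))S(s\phi))$ space. On each level the $n_\nu$ sum to $n$, so a level costs $O(nS(s\phi)/(s\phi))$. Over $O(\log_s n)$ levels this gives the claimed bound. Two technical points need care here. First, if $n_\nu < s\phi$, I use a single conflict list and bound the node by $O(S(n_\nu)) \le O(S(s\phi))$; this is absorbed provided $S$ is at least linear and monotone. Second, I replace $n_\nu$ by $\max(n_\nu, s\phi)$ in the count. If these small-node cases do not absorb cleanly I will simply state the mild regularity assumption on $S$.

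\textbf{Correctness and query time.} For a query with corner $(x,y)$, I walk the root-to-leaf path towards $x$, found in $O(\log n)$ time as in Lemma~\ref{lem:2d_dominance}. At node $\nu$, with $x \in S_j$, I query the prefix structure for the range $(-\infty,y]$ on $P_{\le j-1}$. This correctly counts the points of $P_\nu$ strictly left of $S_j$ that lie below $y$. Points right of $S_j$ are excluded, and points inside $S_j$ are handled deeper in the recursion, by the same three observations (i)--(iii) as before. Each node query costs $O(\log(n/(s\phi)) + Q(s\phi) + k)$. Summing over the $O(\log_s n)$ nodes of the path, and adding the $O(\log n)$ for locating the path and the $O(k)$ for reporting, gives $O(\log n + \log_s n(k + \log(n/(s\phi)) + Q(s\phi)))$.

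\textbf{Main obstacle.} The one real gap is the shallowness claim behind Lemma~\ref{lem:prefix_queries}. The lifted points $(x_i,x_{i+1},j)$ must be formed per set $P_j$ and per color, not across the union. That way $\tilde Q$ contains at most one point per (set, color) pair, hence at most $j\phi \le s\phi$ points. The weight of each lifted point is its rank within its own $P_j$, so summing over the reported points gives the prefix count. Since this is already argued before Lemma~\ref{lem:prefix_queries}, the remaining work is the bookkeeping above. No fractional cascading is needed, because the shallow-cutting point location already costs only $O(\log(n/(s\phi)))$ per node, and that term appears explicitly in the bound.
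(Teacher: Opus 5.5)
Your proposal is correct and follows the paper's proof: an $s$-ary strip tree with one Lemma~\ref{lem:prefix_queries} structure per node on the child strips, queried on $P_{\le j-1}$ at each node of the search path, giving the same per-level space bound and the same $O(\log_s n)$-fold query cost. The extra care you take with nodes where $n_\nu < s\phi$ is something the paper skips. There, charge $O(S(n_\nu))$ and assume $S(n)/n$ nondecreasing, so that $S(n_\nu)\le (n_\nu/(s\phi))\,S(s\phi)$; charging $S(s\phi)$ per small node, or replacing $n_\nu$ by $\max(n_\nu,s\phi)$, would overcount near the leaves.
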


In particular, starting with the data structure from
Theorem~\ref{thm:tradeoff_dominance_data_structure} (set to use
$O(r^{1+\eps})$ space for $r$ points) gives an $O(ns^\eps\phi^\eps \log_s n)$
space structure with query time $O(\log_s n(\log n + k))$. Hence, it
allows us to reduce the space by a factor $s^{1-\eps} / \col^\eps$ while not losing anything in the query time when
$k$ and $\col$ and $s$ are sufficiently large.

Consider for example the case when $\col = \Theta(\log n)$, for a query where $k = \Theta(\log n)$, and we choose $s = \log n$. \Cref{lem:2d_dominance} yields an $O(n s \log_s n) = O\left(n \frac{\log^2 n}{\log \log n} \right)$ space data structure with $O(\log n + k \log_s n) = O\left(\frac{\log^2 n}{\log \log n} \right)$ query time, and \Cref{lem:shallow_data_structure} yields an $O(n (s \col)^{\eps} \log_s n) = O\left(n \frac{\log^{1+ 2\eps} n}{\log \log n} \right)$ space data structure with $O(\log_s n (\log n + k)) = O\left(\frac{\log^2 n}{\log \log n} \right)$ query time; the latter improves the space usage by almost a factor $\log n$.

\begin{colorallary}
\label{col:shallow_datastruct_specific_case}
When $\col = \Theta(\log n)$, we can build an $O(n \log^{2\eps} n \frac{\log n}{\log \log n})$ space data structure with $O(\frac{\log^2 n}{\log \log n})$ query time.
\end{colorallary}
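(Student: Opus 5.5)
The corollary will follow by instantiating \cref{lem:shallow_data_structure} with $s = \log n$ and $\col = \Theta(\log n)$. For the inner $\R^3$ structure I will use \cref{thm:tradeoff_dominance_data_structure} with $d=3$, which also handles the weighted case. On $r$ points I choose its parameter $s' = r^{\eps/2}$. Then $\log_{s'} r = 2/\eps = O(1)$, so the space is $S(r) = O(r (s'\log_{s'} r)^2) = O(r^{1+\eps})$. The query time is $O(\log r \log_{s'} r + k\log_{s'}^2 r) = O(\log r + k)$, so in the notation of \cref{lem:shallow_data_structure} we have $Q(r) = O(\log r)$.

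Space comes next. Here $r = s\col = \Theta(\log^2 n)$, so $S(s\col)/(s\col) = O((s\col)^\eps) = O(\log^{2\eps} n)$. Multiplying by $n\log_s n = n\log n/\log\log n$ gives $O(n \log^{2\eps} n \frac{\log n}{\log\log n})$, as claimed.

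For the query time, \cref{lem:shallow_data_structure} gives $O(\log n + \log_s n\,(k + \log(n/(s\col)) + Q(s\col)))$. Here $\log(n/(s\col)) \le \log n$, $Q(s\col) = O(\log\log n)$, and $\log_s n = \log n/\log\log n$. This yields $O(\frac{\log n}{\log\log n}(\log n + k))$.

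This step is the main obstacle: for the stated bound $O(\log^2 n/\log\log n)$ to hold, we need $k = O(\log n)$. That is automatic, since $k \le \col = \Theta(\log n)$, and it is exactly why the corollary requires $\col = \Theta(\log n)$. With $k = O(\log n)$ the bound becomes $O(\log^2 n/\log\log n)$.

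Two minor checks remain. First, the choice $s=\log n$ must satisfy $2 \le s \le n$ for $n$ large enough. Second, $\eps$ is an arbitrary constant, so using $\eps/2$ inside the $\R^3$ structure is harmless and matches the $(s\col)^{\eps}$ form.
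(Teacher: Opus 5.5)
Your proposal is correct and follows the paper's own route. Like the paper, you instantiate \cref{lem:shallow_data_structure} with $s=\log n$ and $\col=\Theta(\log n)$, using the $\R^3$ structure of \cref{thm:tradeoff_dominance_data_structure} tuned to $O(r^{1+\eps})$ space and $O(\log r + k)$ query time, and you get $O(n(s\col)^\eps\log_s n)$ space and $O(\log_s n(\log n + k))$ query time. Your remark that $k \le \col = O(\log n)$ holds for every query makes explicit what the paper only states for a query with $k=\Theta(\log n)$.
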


\section{Low-space algorithms in 2D}
\label{sec:A_linear_space_algorithm}

In this section we turn to the algorithmic problem in which we are
given a set $\Q$ of $m$ queries that we all wish to answer. For every
query $Q \in \Q$ we must output a list of $k_Q$ (color,count) tuples,
one for each color contained in $Q$. The colors in each of these lists do not have to be in
any specific order, as long as all color counts of a specific query
are together. Ideally, we provide these answers in a stream so that we
do not have to store the entire output of size $K = \sum_Q k_Q$.

Note that for point sets in $\R^1$ we can simply build and query the
linear space solution from \cite{bozanis1995generalizedIntersectionSearching}. For point
sets in $\R^2$, one approach could be to first perform a color
reporting query \citep{shi2005optimalColoredAlgs} in order to
construct, for each color $c$ with $n_c$ points $P_c$, the set of
$m_c$ queries $\Q_c$ that contain at least one point of color $c$;
then $\sum_c n_c = n$ and $\sum_c m_c = K$. We can now use the algorithm by \cite{chan2010countingInversionsOfflineCounting} for offline uncolored orthogonal range counting, which is in the word-RAM model. For each color $c$ we separately compute the count
for color $c$ for each query in $\Q_c$ in
$O((n_c + m_c) \sqrt{\log (n_c + m_c)})$ time, resulting in total time
$O(\sum_c (n_c + m_c) \sqrt{\log (n_c + m_c)}) \leq O(\sqrt{\log (n +
  m)} \sum_c (n_c + m_c)) = O((n + K) \sqrt{\log (n + m)})$. Note that,
since we require all the color counts of a single query to be grouped,
we would need to store the entire output in memory using $O(K)$
additional working space.

In the remainder of this section we focus on dominance queries. A simple approach would be to simply build the data structure from
\Cref{thm:tradeoff_dominance_data_structure} and use it to answer
each query, using $O(n (s \log_s n)^{d-1})$ working space. We now show how to reduce the working space of
this algorithm by a factor $s \log_s n$, while obtaining the same
runtime. This thus yields a linear space algorithm in $\R^2$.

\begin{theorem}
  \label{lem:dominanceOfflineSweep}
  Let $P$ be a set of $n$ points in $\R^d$, with $d \geq 2$, each of
  which has a color from $\{1, \dots, \col\}$, and let $\Q$ be a set of $m$
  dominance queries, and let $2 \leq s \leq n$ be a parameter. We can
  answer all queries in
  $O(n \log n (s \log_s n)^{d-1} + m \log n \log_s^{d-2} n +
K \log_s^{d-1} n)$ time, where
  $K$ is the total output size, using $O(n(s \log_s (n))^{d-2} + m)$
  working space.
\end{theorem}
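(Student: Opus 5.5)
We sweep over the outermost level of the $s$-ary tree of \cref{thm:tradeoff_dominance_data_structure}, so that at any moment we hold only one node's worth of $(d-1)$-dimensional structures instead of all of them. Concretely, we build the balanced $s$-ary tree $T$ on the $x$-coordinates of $P$ implicitly (only its skeleton, $O(n)$ space). We distribute the queries: each query $Q=(-\infty,q_x]\times Q'$ is routed along the root-to-leaf path towards $q_x$, and at each node $\nu$ of that path it is assigned to the child strip $S_i$ containing $q_x$. We then process the nodes of $T$ one at a time (for example in DFS order). At node $\nu$ we process the strips $S_1,\dots,S_s$ in order. For each $i$ we construct the $(d-1)$-dimensional structure $\L_i$ on $L_i'$ from \cref{thm:tradeoff_dominance_data_structure} (for $d-1\geq 2$), or from \cref{lem:1d_dominance_fc} when $d=2$. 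We query it with the projections $Q'$ of all queries assigned to $(\nu,S_i)$, and then discard it before moving to $S_{i+1}$. So at any time only one $\L_i$ exists, of size $O(n(s\log_s n)^{d-2})$, and in the base case $d=2$ only $O(n)$. Together with the $O(m)$ query lists this gives the claimed working space.

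The running time then follows by summing. Building each $\L_i$ at a node with $n_\nu$ points costs $O(n_\nu\log n_\nu (s\log_s n_\nu)^{d-2})$. There are $s$ of them per node and the $n_\nu$ sum to $n$ per level, over $O(\log_s n)$ levels. This gives $O(n\log n(s\log_s n)^{d-1})$, exactly as in the preprocessing bound of \cref{thm:tradeoff_dominance_data_structure}. Each query is asked once per node on its path, i.e.\ $O(\log_s n)$ times, each costing $O(\log n\log_s^{d-3}n + k_Q\log_s^{d-2}n)$, where $k_Q$ bounds the colors reported there. Summing over $Q$ gives $O(m\log n\log_s^{d-2}n + K\log_s^{d-1}n)$.

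The step I expect to be the main obstacle is that the answers of a single query are now produced at $O(\log_s n)$ different times (different nodes), but we must output all counts of $Q$ together and cannot afford a $\phi$-sized counter array per query. Fractional cascading across the whole path (used in \cref{lem:2d_dominance} for the $d=2$ additive $O(\log n)$ term) is also lost, since the structures never coexist. For the second issue I would sort the queries once by $y$ and, at each node, sort the assigned queries by merging. This lets the 1D binary searches be replaced by a merge-like walk, costing $O(n_\nu s+m_\nu)$ per node, which keeps the $m\log n$ term for $d=2$.

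For the first issue I would keep, for each query, a linked list of partial (color,count) pairs. Its length is at most $k_Q$ per node, so $O(k_Q\log_s n)$ in total. This temporarily uses $O(K\log_s n)$ space, which is too much. So instead I would order the processing so that all nodes on one query's path are handled consecutively: process the tree in DFS order, and emit the answer of a query when its leaf is reached. Then a query's accumulator lives only while its path is on the DFS stack. Aggregation uses the global color array $\mu_c$ of \cref{lem:2d_dominance}, but this array is shared, so it only works if queries are finalized one at a time. I would therefore fall back to storing, per active query, the distinct colors seen so far ($\le k_Q$ entries, merged via the $\mu_c$ trick node by node). The working space is then bounded by the total output of the queries currently active. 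Proving this fits in $O(n(s\log_s n)^{d-2}+m)$ — perhaps by arguing each active query's partial list is dominated by points stored in the current structure — is the delicate part I would need to verify.
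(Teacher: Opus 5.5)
There is a genuine gap, and it is the one you flag yourself. Because you discard each $\L_i$ before moving on, a query's answer is produced in pieces at $O(\log_s n)$ different times. The per-query partial lists you fall back on do not fit in the working-space bound. No charging to points in the current structure can fix this, because many active queries charge the same points. For example, put all $m$ query corners in the rightmost leaf strip, above every point, and let the points left of the last root strip $S_s$ use all $\phi$ colors. Then, from the time the root's $S_s$ is processed until that leaf is reached, all $m$ queries are active simultaneously, each holding $\Theta(\phi)$ entries. That is $\Theta(m\phi)$ space, which for $d=2$, $m=n$ and $\phi=\Theta(n)$ is $\Theta(n^2)$ against an $O(n+m)$ budget.

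The missing idea, which is how the paper proves the theorem, is that you never need to restrict yourself to a single $\L_i$. Sweep a hyperplane perpendicular to the $x$-axis. At every node on the current root-to-leaf path, keep the structure $\L_i$ of the child strip $S_i$ containing the sweep plane: build it when the sweep enters $S_i$, destroy it when the sweep leaves, and recurse inside $S_i$ in the meantime.

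These structures can coexist because their point sets are pairwise disjoint. The one at a node is built on points left of $S_i$, whereas everything deeper on the path lies inside $S_i$. Together they therefore still use $O(n(s\log_s n)^{d-2})$ space, or $O(n)$ for $d=2$.

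When the sweep reaches the leaf containing a query's corner, the structures alive are exactly those the query algorithm of \cref{thm:tradeoff_dominance_data_structure} visits. The query is answered completely at that moment, aggregated in the single global list $\mu_c$, which is reset afterwards. Queries are thus finalized one at a time, no per-query partial answers are stored, and only the $O(m)$ query storage is added. Each $\L_i$ is still built exactly once, so your time accounting carries over.

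Your remark that fractional cascading is not automatically available for structures built on the fly is a fair secondary point, which the paper's proof does not discuss. It is independent of the space issue above.
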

\begin{proof}
We follow a similar approach as in \Cref{thm:tradeoff_dominance_data_structure}. We use induction on $d$, and for the inductive step we build a strip-tree with fanout $s$ on the $x$-coordinates of the points. We do not build the $(d-1)$-dimensional data structures yet, as they would take up too much space. The main idea is to build them only when we need them, and destroy them afterwards. We additionally distribute the queries $\Q$ over the tree, placing each query $Q$ in the leaf strip containing its cornerpoint $q$.

We sweep a plane $z$, that is perpendicular to the $x$-axis, from $x = -\infty$ towards $x = \infty$. Consider the root node of the recursion tree. Whenever $z$ enters a strip $S_i$ of this root node, we build the $(d-1)$-dimensional data structure $\L_i$ on $L_i'$. When $z$ leaves strip $S_i$ we destroy this data structure again (and subsequently build the data structure $\L_{i+1}$ when $z$ enters strip $S_{i+1}$). We recursively do the same for the recursive $d$-dimensional data structure on strip $S_i$. As such, each point is contained in at most one $(d-1)$-dimensional data structure at a time, so the total space required for these data structures at any time is $O(n(s \log_s n)^{d-2})$. Including the space required for storing the queries in the tree this yields $O(n(s \log_s n)^{d-2} + m)$ working space.

Whenever $z$ enters a leaf containing a query $q$ we immediately answer this query using the same procedure as before. This is possible since the $(d-1)$-dimensional data structures on $L_i'$ for every strip $S_i$ containing $z$ are exactly the data structures we wish to query.

We build every $(d-1)$-dimensional data structure exactly once, so the total runtime is still the sum of building the data structure from \Cref{thm:tradeoff_dominance_data_structure} ($O(n \log n (s \log_s n)^{d-1})$) plus the cost of querying it $m$ times ($O(m \log n \log_s^{d-2} n +
K \log_s^{d-1})$). The sweeping adds only $O(n \log n)$ time to this, which is dominated.
\end{proof}

We would like to reduce the space usage even further, preferably all the way down to $O(n + m)$ for
any dimension $d$, but this may not be as easy as it sounds.

One idea would be to apply the sweeping strategy from \Cref{lem:dominanceOfflineSweep} across multiple
dimensions, not just one. Consider a $(d-1)$-dimensional data
structure built on some set of points $L_i'$ whenever $z$ entered a
strip $S_i$. The queries that are asked of this data structure come in
order of increasing $x$-coordinate, but in arbitrary order
w.r.t. their other coordinates; this means we can not easily use a
sweeping strategy in more dimensions.

Another idea is to use the fact that we are given all queries in
advance, and rather than building a $(d-1)$-dimensional data structure
on $L_i'$ and querying it one by one, instead performing all queries
on $L_i'$ at once. However, this makes combining the answers of all these queries more difficult. The technique
discussed in~\Cref{thm:tradeoff_dominance_data_structure} for combining the answers of
multiple sub-queries only works if we are handling one query at a time. If
we obtain a partial answer for all $m$ queries, each of which may have
to be combined with other partial answers, we have two issues. Firstly we would need $O(K)$
additional storage to store a table with all these partial answers,
and secondly we would need additional time to find the correct entry in
the table for each partial answer, resulting in e.g. an $O(K \log \col)$ term in the running time.

For the same reason, it is difficult to extend this result to general more-sided queries using techniques similar to \Cref{lem:data_structureIncreaseNumberOfSides}; this would also require combining many sub-queries. We can however increase the number of sides by $1$, using the fact that we can answer dominance queries in order of any coordinate.

\begin{theorem}
Let $P$ be a set of $n$ points in $\R^2$, each of which has a color from $\{1,\dots,\col\}$, and let $\Q$ be a set of $3$-sided queries of the form $[x_1,x_2] \times (-\infty,y)$, and let $2 \leq s \leq n$ be a parameter. We can answer all queries in $O(n \log^2 n s \log_s n + m \log n + K \log_s n)$ time, using $O(n + m)$ space.
\end{theorem}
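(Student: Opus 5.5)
We combine the binary-tree split of Lemma~\ref{lem:data_structureIncreaseNumberOfSides} with the sweep of Theorem~\ref{lem:dominanceOfflineSweep}, applied once per node of the binary tree.

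\textbf{Splitting the queries.} Build a balanced binary tree $T$ on the $x$-coordinates of $P$. Assign every query $Q=[x_1,x_2]\times(-\infty,y)$ to the highest node $u$ with $x_u\in[x_1,x_2]$; this takes $O(\log n)$ per query, so $O(m\log n)$ in total. At $u$ the query splits into two $2$-sided subqueries on disjoint point sets: $[x_1,\infty)\times(-\infty,y)$ on $P_{c_1}$ and $(-\infty,x_2]\times(-\infty,y)$ on $P_{c_2}$. The answer to $Q$ is the sum of the two answers, and since the point sets are disjoint each color appears at most twice.

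\textbf{Processing one level at a time.} We handle the nodes of $T$ one by one, for example level by level. For a node $u$ with its assigned query set $\Q_u$, we answer all left halves on $P_{c_1}$ by running the algorithm of Theorem~\ref{lem:dominanceOfflineSweep} with $d=2$, mirrored in $x$. We then answer all right halves on $P_{c_2}$ the same way. With $d=2$ that theorem uses $O(|P_u|+|\Q_u|)$ working space, which is linear. Summed over a level, the cost is $O(n\log n\, s\log_s n + |\Q_\ell|\log n + K_\ell\log_s n)$. Over the $\log n$ levels the first term becomes $O(n\log^2 n\, s\log_s n)$, and the query terms sum to $O(m\log n+K\log_s n)$, because each query lives at exactly one node. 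The space is reused between nodes, so it stays $O(n+m)$.

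\textbf{Combining answers without extra space.} This is the main obstacle. The sweep of Theorem~\ref{lem:dominanceOfflineSweep} answers the subqueries in sweep order, so the two halves of one query are produced at different times. Storing the first half until the second arrives could cost $O(K)$ memory. To avoid this, we process the two halves so that both are available at the same moment. Within node $u$, both halves share the same $y$-bound. We sweep the strip-tree on $P_{c_1}$ in decreasing $x$ (giving the dominance $[x_1,\infty)$ structures) and the strip-tree on $P_{c_2}$ in increasing $x$.

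Each query needs the root-to-leaf $1$D structures of both trees simultaneously. We therefore interleave the two sweeps by answering queries in a single global order. We will try sorting the queries by $x_1$ and accept rebuilding the structures on the $c_2$ side. A rebuild costs $O(|P_u| s\log n)$ per level of the strip-tree, which is still within the $O(n\log n\, s\log_s n)$ budget per binary level as long as each $1$D structure is rebuilt $O(1)$ times. This is the step that must be checked carefully.

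If interleaving fails, we fall back to doing the $c_1$ half first and streaming its counts into the per-color list $\mu_c$. That list only accumulates into a single query's counter set, so it works only when one query is active at a time. Under that restriction, for each query in turn we access the precomputed structures via persistence-free rebuilding, namely the sweep over $x_1$ order.

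After both halves have been added into the $\mu_c$ nodes, we output the grouped counts for the query and reset the list. Resetting takes $O(k)$ time, so this stays within $O(K)$ total and needs only $O(\phi)\subseteq O(n)$ space.
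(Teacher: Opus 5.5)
You have a genuine gap in the combining step, which you correctly flag as the crux. Your splitting of queries and your level-by-level time and space accounting match the paper. But the interleaving you try does not work.

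If you build the strip-trees on $x$ and sweep $P_{c_1}$ by decreasing $x$ (queries ordered by $x_1$), the $x_2$-values of those same queries arrive in arbitrary order. The sweep over $P_{c_2}$ therefore has to move back and forth. Each time it re-enters a root-level strip $S_i$ it must rebuild the 1D structure on $L_i'$, which costs $\Omega(n_u)$ for the right-hand strips. Take queries whose $x_2$ alternates between the leftmost and rightmost strips of $P_{c_2}$: this happens $\Theta(m_u)$ times, for $\Omega(m_u n_u)$ total time. So the hoped-for ``$O(1)$ rebuilds per structure'' fails in general. Your fallback (``persistence-free rebuilding, namely the sweep over $x_1$ order'') restates the same approach, has the same problem, and is not a concrete algorithm.

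The missing idea is the observation you wrote down but did not use: $Q_1$ and $Q_2$ share the same $y$-bound. Apply Theorem~\ref{lem:dominanceOfflineSweep} with the coordinates swapped:
\begin{itemize}
\item Build the fanout-$s$ strip-tree on the $y$-coordinates of $P_{c_1}$ (resp.\ $P_{c_2}$).
\item Store the 1D structures on the $x$-coordinates, mirrored for the $[x_1,\infty)$ side.
\item Sweep upward in $y$.
\end{itemize}
Both sweeps then answer their subqueries in the same order, increasing $y$, so you can run them in lockstep. Advance both to the next query's $y$, answer $Q_1$ and $Q_2$ consecutively, and accumulate both into the $\mu_c$ counters. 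Then output the grouped counts and reset.

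Only one query is active at a time, no structure is ever rebuilt, and the working space stays $O(n_u+m_u)$. With this change your time analysis goes through as written. This is how the paper combines the two halves.
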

\begin{proof}
We use the same technique as in \Cref{lem:data_structureIncreaseNumberOfSides} by building a binary tree $T$ on the $x$-coordinates of $P$. Additionally we distribute all queries over the tree, by placing each query $Q = [x_1,x_2] \times -\infty,y]$ in the highest node $u \in T$ such that $x_u \in [x_1, x_2]$. For a node $u$ with children $c_1$ and $c_2$ let $\Q_u$ be the set of queries placed in it, with total output complexity $K_u$. We split each $3$-sided query $Q \in \Q_u$ in two dominance queries $Q_1$ and $Q_2$ as before, resulting in the sets $\Q_1$ and $\Q_2$. Importantly, note that each pair $Q_1$ and $Q_2$ have the same $y$-coordinate. We use \Cref{lem:dominanceOfflineSweep} twice to solve the queries $\Q_1$ on the points $P_{c_1}$, and the queries $\Q_2$ on the points $P_{c_2}$. We are then left with two lists of query answers that we need to combine. We observe that \Cref{lem:dominanceOfflineSweep} yields the answers to the queries in order of the sweep direction; by sweeping in the $y$-direction, we can thus receive the answers to queries $\Q_1$ and $\Q_2$ in order of $y$-coordinate. Both lists of answers will thus be in the same order, and we can combine them in $O(K_u)$ time by simply scanning through both lists simultaneously.

At each node $u$ with $n_u$ points and $m_u$ queries with output complexity $K_u$ we thus use \linebreak$O(n_u \log n_u s \log_s n_u+ m_u \log n_u n_u + K_u \log_s n_u)$ time. Since each query appears only once in the tree, and each point appears $O(\log n)$ times, this results in $O(n \log^2 n s \log_s n + m \log n + K \log_s n)$ total time.
\end{proof}

\section{Concluding remarks}
In this article we have considered the frequency reporting problem. We have presented a simple parameterized data structure, and have shown an almost-matching lowerbound for the weighted case in the arithmetic model. A prominent open question is whether we can still actually achieve strictly output sensitive $O(f(n) + k)$ query time with linear space by circumventing the lower bound; for example in the case where all weights are one, so the frequency reporting case; or by exploiting the fact that we can subtract counts; or in a different computational model. Note that even in the uncolored case, where the lower bound for counting queries uses the same assumptions, no such results are known.

Here, we focused on axis-aligned query boxes. It would be interesting to know if the parameterized data structure could be adapted to more general or different query ranges, such as (regular) convex polygons or circles. 

For the offline problem in $\R^2$, we presented a linear space algorithm that was strictly output sensitive, but which essentially used $O(n^\eps + k)$ time per query for $m = \Theta(n)$ queries. Another open question is whether it is possible to reduce this $n^\eps$ factor, ideally all the way to $\log n$, so using $O((n + m) \log n + K)$ time to answer all $m$ queries using $O(n + m)$ space. Our algorithm currently uses linear space in $\R^2$, but super-linear space for $d > 2$; new techniques could also help in achieving a strictly output sensitive algorithm using linear space in higher dimensions.

\bibliographystyle{abbrvnat}
\bibliography{references}

\end{document}